\algrenewcommand\algorithmiccomment[1]{\hfill\textcolor{gray}{\texttt{//#1}}}
\newcommand{\BlockComment}[1]{%
  \Statex{}\hspace{\algorithmicindent}\hspace{\algorithmicindent}\textcolor{gray}{\texttt{/* #1 */}}%
}
\crefname{algorithm}{Algorithm}{Algorithms} %
\definecolor{White}{rgb}{1,1,1}
\definecolor{Black}{rgb}{0,0,0}
\definecolor{Gray}{rgb}{.3,.3,.3}
\definecolor{LightGray}{rgb}{.8,.8,.8}
\definecolor{LightYellow}{rgb}{0.95,0.90,0.75}
\definecolor{Yellow}{rgb}{0.95,0.8,0.45}
\definecolor{LightBlue}{rgb}{0.85,0.9,.95}
\definecolor{Blue}{rgb}{0.25,0.45,.75}
\definecolor{LightRed}{rgb}{0.95,0.9,0.85}
\definecolor{Red}{rgb}{0.85,0.25,0.25}
\definecolor{LightGreen}{rgb}{0.75,0.95,0.85}
\definecolor{Green}{rgb}{0.25,0.75,.45}
\definecolor{SkyBlue}{rgb}{0.25,0.45,0.75}
\colorlet{ChannelColor}{SkyBlue}
\colorlet{ChannelTextColor}{White}
\colorlet{ReadoutColor}{White}
\colorlet{LineColor}{Black}
\colorlet{TextColor}{White}
\tikzset{%
  draw=LineColor, text=LineColor, %
  control/.style={circle, fill=LineColor, minimum size=3, inner sep=0}, %
  gate/.style={draw, minimum size=14, fill=ChannelColor, inner sep=2,
    text=ChannelTextColor}, %
  target/.style={circle, draw, minimum size=6, inner sep=0}, %
  cross/.style={cross out, draw, minimum size=2.5, inner sep=0}, %
}
\theoremstyle{plain} 
\newtheorem{theorem}{Theorem}[section] %
\newtheorem{proposition}{Proposition}[section] %
\theoremstyle{definition} %
\newtheorem{definition}{Definition}[section] %
\theoremstyle{remark} %
\newcommand{\microspace}{\mspace{.5mu}} %
\renewcommand{\ket}[1]{\ensuremath{\lvert\microspace#1%
    \microspace\rangle}} %
\renewcommand{\bra}[1]{\ensuremath{\langle\microspace#1%
    \microspace\rvert}} %
\newcommand{\ignore}[1]{} %
\newcommand{\ii}{\mathrm{i}} %
\newcommand{\class}[1]{\textup{#1}\xspace} %
\newcommand{\NP}{\class{NP}} %
\newcommand{\MIP}{\class{MIP}} %
\newcommand\MIP*{\ensuremath{\class{MIP}^*}} %
\DeclareMathOperator{\poly}{poly} %
\newcommand{\calG}{\mathcal{G}}
\newcommand{\calR}{\mathcal{R}}
\newcommand{\calT}{\mathcal{T}}
\newcommand{\ZZ}{\mathbb{Z}}
\newcommand{\FF}{\mathbb{F}}
\def\lup{\mathsf{up}}
\DeclareMathOperator{\lrw}{lrw}
\DeclareMathOperator{\rw}{rw}
\DeclareMathOperator{\pw}{pw}
\DeclareMathOperator{\tw}{tw}
\DeclareMathOperator{\cc}{cc}
\newcommand\restr[2]{{
  \left.\kern-\nulldelimiterspace
  #1 
  \vphantom{\big|} 
  \right|_{#2} 
  }}
\newcommand{\size}[1]{\ensuremath{\mathrm{B}(#1)}} %
\newcommand{\bin}{\ensuremath{\{0, 1\}}} %
\newcommand{\gatestyle}[1]{\mathrm{#1}} %
\newcommand{\Had}{\gatestyle{H}} %
\newcommand{\T}{\gatestyle{T}} %
\newcommand{\X}{\gatestyle{X}} %
\newcommand{\Y}{\gatestyle{Y}} %
\newcommand{\Z}{\gatestyle{Z}} %
\newcommand{\W}{\gatestyle{W}} %
\newcommand{\CZ}{\gatestyle{CZ}} %
\newcommand{\CCZ}{\gatestyle{CCZ}} %
\newcommand{\iSWAP}{\gatestyle{iSWAP}} %
\newcommand{\fSim}{\gatestyle{fSim}} %
\begin{document}

\title{Breaking the Treewidth Barrier in Quantum Circuit Simulation with Decision Diagrams}

\newcommand*\samethanks[1][\value{footnote}]{\footnotemark[#1]}

\author[1]{Bin Cheng\thanks{First authors.}}
\author[2,3]{Ziyuan Wang\samethanks}
\author[2]{Ruixuan Deng}
\author[2]{Jianxin Chen\thanks{Correspondence authors. Email: \{chenjianxin,jizhengfeng\}@tsinghua.edu.cn.}}
\author[2,3]{Zhengfeng Ji\samethanks}

\affil[1]{Centre for Quantum Technologies, National University of Singapore}
\affil[2]{Department of Computer Science and Technology, Tsinghua University}
\affil[3]{Zhongguancun Laboratory}

\renewcommand\Affilfont{\normalsize\itshape}%
\renewcommand\Authfont{\large}
\setlength{\affilsep}{4mm}
\renewcommand\Authand{\rule{6mm}{0mm}}

\date{\today}

\maketitle

\begin{abstract}
  Classical simulation of quantum circuits is a critical tool for validating
  quantum hardware and probing the boundary between classical and quantum
  computational power.
  Existing state-of-the-art methods, notably tensor network approaches, have
  computational costs governed by the treewidth of the underlying circuit graph,
  making circuits with large treewidth intractable.
  This work rigorously analyzes FeynmanDD, a decision diagram-based simulation
  method proposed in CAV 2025 by a subset of the authors, and shows that the
  size of the multi-terminal decision diagram used in FeynmanDD is exponential
  in the \emph{linear rank-width} of the circuit graph.
  As linear rank-width can be substantially smaller than treewidth and is at
  most larger than the treewidth by a logarithmic factor, our analysis
  demonstrates that FeynmanDD outperforms all tensor network-based methods for
  certain circuit families.
  We also show that the method remains efficient if we use the Solovay-Kitaev
  algorithm to expand arbitrary single-qubit gates to sequences of Hadamard and
  $\T$ gates, essentially removing the gate-set restriction posed by the method.
\end{abstract}

\section{Introduction}

Classical simulation of quantum circuits plays a central role in advancing our
understanding of quantum computation.
On one hand, the limitation of classical methods for simulating universal
quantum circuits that are revealed during the study reinforces the belief that
quantum computers possess computational capabilities beyond those of classical
systems~\cite{Got98,Val02}.
On the other hand, classical simulation techniques provide indispensable tools
for validating experimental claims of quantum computation (see
e.g.~\cite{KLR+08,MGE11,AAB+19}), particularly in the context of recent quantum
supremacy demonstrations~\cite{AAB+19}.
Identifying the precise point at which quantum computation outperform classical
ones is therefore a question of both practical and theoretical significance.

Over the past decades, a rich set of classical simulation techniques has
emerged.
Prominent among these are the efficient simulation of Clifford circuits enabled
by the celebrated Gottesman-Knill theorem~\cite{Got98}, algorithms exploiting
low T-count structures~\cite{BG16}, matchgate simulation
techniques~\cite{Val02}, and, most notably, tensor network-based
approaches~\cite{MS08,HZN+21,PVWC07,OGo19}.
Among these, tensor networks have emerged as particularly powerful tools.
They are widely used in many-body physics as the numerical computation engine to
study many-body phenomena~\cite{Vid04,Vid08,Oru14} and have become the method of
choice for simulating general quantum circuits, especially those lacking special
algebraic structures like in Clifford or matchgate circuits.

The performance of tensor network-based algorithms is governed by an important
graph parameter of the underlying circuit: the treewidth of the associated
tensor network~\cite{MS08}.
The treewidth measures how close a graph is to a tree and is, roughly speaking,
a measure of the complexity of entangling structure of the circuit.
Both the time and space complexity of tensor network-based simulation algorithms
are bounded by $2^{O(w)} \poly(m)$, where $w$ is the treewidth and $m$ is the
number of gates in the circuit.
There are refinements and many variants of classical simulation methods that can
be characterized by other graph parameters, such as the path-width~\cite{ALM07}
the vertex congestion and modified cut width~\cite{OGo19}, but these are within
linearly or polylogarithmically factors of the treewidth and thus are only of
practical importance for implementation and do not affect the theoretical bounds
of the related algorithms.
From a parameterized complexity perspective~\cite{FG06}, these tensor
network-based methods are classical fixed-parameter tractable (FPT) algorithms
using treewidth as the parameter.

For generic quantum circuits with little inherent structure, tensor network
algorithms with complexity exponential in the treewidth have represented the
state of the art for decades.
This naturally raises the question:
\begin{center}
  \textit{Is the treewidth of the tensor network an inherent barrier to the
    classical simulation of general quantum circuits, or can it be overcome
    using fundamentally different simulation techniques?}
\end{center}

In this work, we answer this question in the affirmative.
We show that the recently introduced FeynmanDD method~\cite{WCYJ25} admits a
rigorous complexity analysis in terms of the so called \emph{linear rank-width}
of the same circuit graph.
The approach departs entirely from the tensor network paradigm, relying instead
on \emph{binary decision diagrams} (BDDs)~\cite{Bry86,Bry95,Weg00,Knu09} as the
core computational data structure.
Importantly, there exist families of circuits where the linear rank-width is
significantly smaller than the treewidth, suggesting the potential for
substantial speedups in classical simulation.
Thus, the FeynmanDD method opens a new regime for classical simulation beyond
the treewidth barrier, paving the way for its use in validating, benchmarking,
and advancing our understanding of quantum computation.

At the core of the FeynmanDD algorithm lies a Feynman path integral type of
simulation powered by an underlying decision diagram data structure.
For a circuit $C$ using many supported discrete gate sets, including
$\mathcal{Z} = \{\Had, \CCZ\}$, $\mathcal{T} = \{\Had, \T, \CZ\}$, and the
$\iSWAP$ Google supremacy gate set
$\mathcal{G} = \{ \sqrt{\X}, \sqrt{\Y}, \sqrt{\W}, \iSWAP \}$\footnote{In~\cite{WCYJ25},
  gate sets containing $\fSim$ were also supported.
  However, it is technically complicated to handle using the methods in this
  paper, so we do not consider it here.}, many simulation tasks such as
computing the amplitude, estimating the acceptance probability, and sampling the
outcome of a computational basis measurement can be formulated as the
computation of an exponential sum known as a sum-of-power (SOP) form:
$\frac{1}{R}\sum_{x} \omega_{r}^{f_{C}(x)}$, where $r$ is an integer modulus
determined by the gate set and is a fixed small constant, $R$ is a normalization
factor, and $f_{C}$ is a low-degree polynomial over $\{0,1\}$ taking integer
values modulo $r$.
The computation of the exponential sum in the SOP can be reduced to a constant
number of counting problems that compute the size of $\{ x \mid f_{C}(x) = j\}$.
The key insight of FeynmanDD is that if the function $f_{C}(x)$ has a succinct
$r$-terminal decision diagram, the counting can be performed efficiently.
Even though the method borrows a lot from the classical decision diagram
literature, the heavy use of the efficient counting algorithm of BDDs to do
circuit analysis is, to the best of our knowledge, unique to the FeynmanDD
method.

The decision diagrams employed in FeynmanDD have long been a standard tool in
classical computer science for representing and manipulating Boolean circuits
and functions efficiently.
Introduced in the 1980s~\cite{Bry86}, BDDs have since become central to
applications such as circuit synthesis, formal verification, and model checking.
Their impact has been widely recognized, and many textbooks are dedicated to
this topic~\cite{Weg00,Knu09,MMBS04}.
In \emph{The Art of Computer Programming}, Knuth dedicated an entire section to
the theory and applications of BDDs~\cite{Knu09}.

BDDs possess several important features and properties, and the key result we
rely on to prove the linear rank-width bound is an explicit bound on the number
of nodes at any level of the diagram---that is, the number of distinct functions
obtained by assigning values to the preceding variables.
Specifically, for a function $f(x_{1}, x_{2}, \ldots, x_{n})$ and a variable
order $x_{1}, x_{2}, \ldots, x_{n}$, the number of nodes at the $i$-th level is
given by the number of distinct functions that essentially depend on $x_{i}$ and
result from assigning values to $x_{1}, x_{2}, \ldots, x_{i-1}$ in $f$.
Although this number can be as large as $2^{i-1}$, it can be significantly
smaller for certain functions and can be linked to the linear rank-width of the
underlying graph.
Our approach combines an explicit bound on the number of nodes at each level of
the Feynman diagram with a tailored, level-by-level BDD construction algorithm.
By integrating these techniques, we demonstrate that the space and time
complexity of our FeynmanDD framework is exponential in the linear rank-width of
the circuit graph, particularly for two-qubit gate sets such as $\mathcal{T}$
and $\mathcal{G}$.

Using this rank-width characterization, we also formally prove that diagonal
single-qubit gates such as $\T$ have limited impact on the complexity and that
FeynmanDD is efficient even for circuits with arbitrary single-qubit gates.
Specifically, we show that: first, the number of nodes at each level increases
by at most a factor of $r$ when diagonal gates like $\T$ gates are introduced;
and second, adding a sequence of $\Had$ and $\T$ gates to the circuit increases
the linear rank-width of the circuit graph by at most $2$.
Together with the Solovay-Kitaev theorem~\cite{Kit97,NC00}, this result
significantly mitigates the restriction of the method, which originally applied
only to discrete gate sets.
Thus, in essence, the complexity of the method is determined by the entangling
backbone of the circuit, consisting of $\CZ$ and $\Had$ gates.
Note that the backbone of $\CZ$ and $\Had$ gates are Clifford gates, but they
are not necessarily efficient to simulate using FeynmanDD\@; however, if they
are, then one can actually simulate the circuit when an arbitrary number of $\T$
gates are added to the backbone with tolerable overhead.

As an interesting remark, we point out that for a circuit of $n$ qubits and $m$
gates, it is well-known that Schr\"{o}dinger-type simulation has space and time
complexities of $2^{O(n)}$ and $m 2^{O(n)}$, respectively.
The Feynman path integral method is space-efficient, requiring only polynomial
space, but its time complexity is $2^{\Omega(m)}$, which is exponential in the
number of gates and typically far worse than the Schr\"{o}dinger method.
Interestingly, however, decision diagrams can somehow automatically organize the
paths to avoid this drawback of Feynman-style simulation and can even outperform
tensor-network based methods for certain circuit families, as our analysis
indicates.

This work leaves open many interesting problems.
\begin{itemize}
  \item First, FeynmanDD is characterized by linear rank-width, which uses a
        special caterpillar tree in the definition.
        It is interesting to ask whether one can further improve the algorithm
        so that its complexity is characterized by the rank-width, rather than
        the linear rank-width, of a circuit.
        This could further improve efficiency and relate the simulation of
        quantum circuits to more graph parameters.
  \item Second, the use of decision diagrams may be an overkill for the simple
        task we face and the simple functions $f_{C}$ we work with in our
        problem.
        Is it possible to optimize the algorithm so that we do not need to
        maintain the full decision data structure in memory?
        We used this idea to prove that the construction of the FeynmanDD can be
        done in time exponential in the linear-rank width in
        \cref{subsec:construction} but it may motivate further implementation
        optimizations.
  \item Third, given the success of tensor-network methods in many-body physics,
        it is hopeful that the alternative FeynmanDD method, which sometimes
        offers advantages, may also find applications in this field.
        This is an interesting direction to explore.
  \item Finally, tensor-network methods have a practical implementation
        advantage, as there have been many optimizations and even GPU
        accelerations.
        Can we perform similar optimizations for decision diagram-based methods?
\end{itemize}

\section{Preliminaries}

\subsection{Sum-of-powers representation}

Here, we review the sum-of-powers (SOP) representation of quantum circuits,
which is derived from the Feynman path integral
formalism~\cite{WCYJ25}.
Roughly speaking, an SOP is a tensor that has an alternative representation as a
sum of powers of the $r$-th root of unity $\omega_{r} = e^{2\pi\ii/r}$ for some
fixed $r$.
That is, it takes the form
\begin{equation}\label{eq:sop}
  \frac{1}{R} \sum_{\vb{y}} \omega_{r}^{f(\vb{x}, \vb{y})},
\end{equation}
where $R$ is a normalization coefficient, and $f(\vb{x}, \vb{y})$ is a function
of Boolean variables $\vb{x}, \vb{y}$ taking integer values modulo $r$.
It can be viewed as a tensor with $n = \ell(\vb{x})$ open legs of dimension $2$
where $\ell(\vb{x})$ is the number of Boolean variables in $\vb{x}$.

Many quantities of a quantum circuit can be represented as an SOP if the gate
set has certain properties, which we now describe.
We use the gate set $\mathcal{T} = \{\Had, \T, \CZ\}$ as an example to
illustrate the idea.
A quantum circuit $C$ specified using $\mathcal{T}$ can be written as
$C = U_m \ldots U_2 U_1$, where each $U_i \in \mathcal{T}$ acts on one or two
qubits of the circuit.
Inserting identity operators between the gates, we can write the tensor of $C$
as
\begin{equation*}
  \mel{\vb{x}_1}{C}{\vb{x}_0} = \sum_{\vb{y}_1, \ldots, \vb{y}_{m-1}}
  \bra{\vb{x}_1} U_m \dyad{\vb{y}_{m-1}} U_{m-1}
  \cdots U_2 \dyad{\vb{y}_1} U_1 \ket{\vb{x}_0}.
\end{equation*}
For simplicity, we identify $\vb{x}_{0}$ with $\vb{y}_{0}$ and $\vb{x}_{1}$ with
$\vb{y}_{m}$.
The above equation can be written as
\begin{equation}\label{eq:path-sum}
  \mel{\vb{x}_1}{C}{\vb{x}_0} = \sum_{\vb{y}_1, \ldots, \vb{y}_{m-1}}
  \prod_{j=1}^{m} \bra{\vb{y}_j} U_j \ket{\vb{y}_{j-1}}.
\end{equation}
This summation is essentially the Feynman path integral formalism.
If $U_{j}$ is a $\Had$ gate acting on the $s$-th qubit, we have
\begin{equation*}
  \bra{\vb{y}_j} U_j \ket{\vb{y}_{j-1}}
  = \frac{1}{\sqrt{2}} {(-1)}^{\vb{y}_{j,s} \vb{y}_{j-1,s}}
  = \frac{1}{\sqrt{2}} \omega_{8}^{4\vb{y}_{j,s} \vb{y}_{j-1,s}},
\end{equation*}
which is already in the required power form.
If $U_{j}$ is a $\CZ$ gate acting on the $s$-th and $t$-th qubits, we have
\begin{equation*}
  \bra{\vb{y}_j} U_j \ket{\vb{y}_{j-1}}
  = {\omega_{8}}^{4\vb{y}_{j,s} \vb{y}_{j,t}}
  \delta_{\vb{y}_{j,s}, \vb{y}_{j-1,s}} \delta_{\vb{y}_{j,t}, \vb{y}_{j-1,t}},
\end{equation*}
which is a power form if we identify the variable $\vb{y}_{j,s}$ with
$\vb{y}_{j-1,s}$, and $\vb{y}_{j,t}$ with $\vb{y}_{j-1,t}$.
The $\T$ gate is also a diagonal gate and can be treated similarly.
When $U_{j}$ is a $\T$ gate on the $s$-th qubit, we have
\begin{equation*}
  \bra{\vb{y}_j} U_j \ket{\vb{y}_{j-1}}
  = {\omega_{8}}^{\vb{y}_{j,s}} \delta_{\vb{y}_{j,s}, \vb{y}_{j-1,s}}.
\end{equation*}
It is also a power of $\omega_{8}$ if we identify $\vb{y}_{j,s}$ with
$\vb{y}_{j-1,s}$.
This explains how circuits in gate set $\mathcal{T}$ can be written as an SOP
tensor of modulus $8$.
The polynomial $f$ in this case is a degree-2 polynomial over $\ZZ_8$ and the
normalization factor $R = \sqrt{2^h}$, where $h$ is the number of Hadamard gates
in the circuit.
As an example, consider the circuit in \cref{fig:circuit_and_graph}~(a).
The polynomial $f_C$ in its SOP representation is
$4 x_1 x_1' + 4 x_1' x_3 + 4 x_2 x_5 + 4 x_3 x_5 + 4 x_3 x_4 + 4 x_5 x_6 + x_1' + 2 x_5$.

The SOP representation is powerful and flexible.
SOP representations exist for other gate sets such as
$\mathcal{Z} = \{\Had, \CCZ\}$ with modulus $2$ and the Google supremacy gate
set $\mathcal{G}$~\cite{AAB+19} with modulus $24$.
Apart from computing the amplitude of the quantum circuit, other computational
tasks such as estimating the acceptance probability or sampling the output
distribution can also be reduced to computing the SOP form.
Usually, these quantities can also be expressed in the SOP form
$\frac{1}{\sqrt{R}} \sum_{\vb{x}} \omega^{f(\vb{x})}$, but with a different
polynomial $f$ and a different normalization factor $R$.
We refer to a more in-depth discussion on representing circuit-related
quantities using SOP in~\cite{WCYJ25}.
We call the variables that are not summed over in the final representation
external variables, and the rest internal variables.

In this work, we mainly focus on the gate set
$\mathcal{T} = \{ \Had, \T, \CZ \}$ as it is the most commonly used two-qubit
discrete gate set.
As the above discussion indicates, every $\Had$ gate relates two different
variables of the qubit before and after the application of the gate and can be
seen as a time-like correlation gate.
The $\CZ$ gate uses two input variables identified with two output variables and
can be seen as a space-like correlation gate.
These two gates have the same degree-$2$ form in the polynomial function in the
exponent.
The $\T$ gates are represented as linear terms in the exponent.

\begin{figure}[t]
  \centering
  \includegraphics[width = \textwidth]{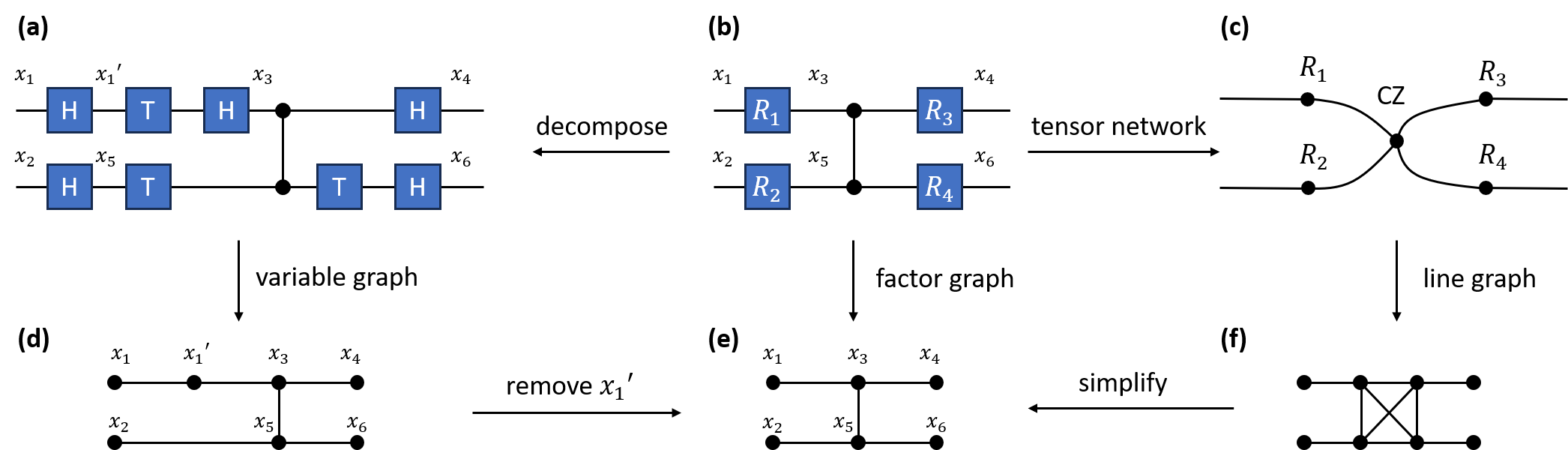}
  \caption{Quantum circuits and graphs. 
    (a) A quantum circuit $C'$ constructed from the gate set $\mathcal{T}$.
    (b) A quantum circuit $C$ constructed from the gate set $\mathcal{R}$, which
    has the same unitary as the circuit $C'$.
    Here, we let $R_1 = \Had \T \Had$, $R_2 = \T \Had$, $R_3 = \Had$, and $R_4 = \Had \T$.
    (c) The tensor network of the circuit $C$.
    (d) The variable graph (and the factor graph) of the circuit $C'$.
    (e) The factor graph of the circuit $C$.
    (f) The line graph of the tensor network in (c).
  }\label{fig:circuit_and_graph}
\end{figure}

In this work, we also consider another gate set $\calR$ consisting of all
single-qubit rotations and the $\CZ$ gate.
For continuous rotation gates, one can first apply the Solovay-Kitaev algorithm
to decompose the continuous rotation into a sequence of discrete gates from the
finite universal gate set $\mathcal{T}$, and then express the tensor in the SOP
form.
This incurs only a logarithmic overhead in the number of
gates~\cite{NC00}.

\subsection{Decision diagrams and Feynman decision diagrams}\label{sec:bdd-definition}
A binary decision diagram (BDD) is a rooted, directed acyclic graph (DAG) used
to represent Boolean functions.
It consists of decision nodes and terminal nodes.
Each decision node is labeled by a variable and has two outgoing edges: one for
the variable being $0$ and the other for the variable being $1$.
The terminal nodes are labeled either $0$ or $1$.
A path from the root to a terminal node corresponds to evaluating the Boolean
function for a given variable assignment.
In the literature, a special type of BDD called a reduced ordered BDD is the
most important, and it is customary to refer to these simply as BDDs.
Here, \emph{ordered} means that the variable ordering along any path from the
root to a terminal node is consistent, and \emph{reduced} means that there are
no isomorphic subgraphs and no node has identical children.
The most important feature of a reduced ordered BDD is that it provides a
canonical representation of a Boolean function.
In our case, we will need a variant of BDDs where the number of terminal nodes
is $r$, not necessarily $2$.
These are called multi-terminal BDDs (MTBDDs)~\cite{BFG+93}.

As explicitly listed in~\cite{Knu09}, BDDs possess many desirable virtues and
properties.
We summarize some of these that will be used in this work; readers are referred
to~\cite{Knu09,Weg00} for more details.
Note that the following properties also hold for MTBDDs with little or no
modification.
For this reason, we may simply refer to MTBDDs as BDDs in the rest of this work.
First, and most important for our work, is their counting ability.
Specifically, given a function $f: \bin^n \to \ZZ_r$ whose BDD representation
has size $\size{f}$, counting the number of solutions to $f(\vb{x}) = j$ for
$j = 0, 1, \ldots, r-1$ can be done in time $\order{n \size{f}}$ using
BDDs~\cite{Weg00,Knu09}.
This serves as the computational engine of the FeynmanDD method.
Second, BDDs can be composed without an explosion in size.
For example, if two functions $f$ and $g$ have bounded BDD sizes $\size{f}$ and
$\size{g}$ respectively, then any combination of them, such as $f \land g$ or
$f + g \pmod{2}$, has size at most $\order{\size{f} \size{g}}$.
Third, there is a simple way to bound the number of nodes at each level of a
BDD\@.
Each BDD node can be associated with a subfunction of the original function.
For a node at the $i$-th level, the path from the root to this node activates a
partial assignment of the first $i-1$ variables.
In the notation of~\cite{WCYJ25}, this can be written as
$f[a_1/x_1, \ldots, a_{i-1}/x_{i-1}]$, corresponding to setting $x_j = a_j$ for
$j = 1, \ldots, i-1$.
The resulting subfunction depends only on the remaining variables
$x_i, \ldots, x_n$, and the node represents this subfunction.
Since the BDD is reduced, the number of nodes at the $i$-th level equals the
number of distinct subfunctions $f[a_1/x_1, \ldots, a_{i-1}/x_{i-1}]$ for all
assignments $a_1, \ldots, a_{i-1} \in \bin$ that essentially depend on $x_i, \ldots, x_n$.

As shown in~\cite{WCYJ25}, many quantities of quantum circuits can
be reduced to the evaluation of a SOP with no external variables.
The evaluation of the SOP can be written as
\begin{equation*}
  \frac{1}{R} \sum_{\vb{x}} \omega_{r}^{f(\vb{x})}
  = \sum_{j=0}^{r-1} \frac{N_{j}}{R} \omega_{r}^{j},
\end{equation*}
where $N_{j} = \bigl|\{ x \mid f_{C}(x) = j\}\bigr|$.
The FeynmanDD method consists of three steps: (a) identifying the polynomial
$f: \bin^n \to \ZZ_r$ in the SOP\@; (b) constructing the $r$-terminal decision
diagram for $f$; and (c) utilizing the counting ability of BDDs to compute
$N_{j}$ for each $j$ and adding $r$ complex numbers to get an exact numerical
value of the SOP\@.
The resulting multi-terminal decision diagram is called a FeynmanDD to highlight
its origin from the Feynman path integral formalism~\cite{WCYJ25}.
Numerical studies in~\cite{WCYJ25} and \cref{subsec:exp} of this paper reveal
that it is a competitive method of quantum circuit simulation for various
families of circuits.

It is well-known that the variable ordering will significantly impact the size
of BDD and that finding an optimal variable ordering is
$\NP$-hard~\cite{Weg00}.
In~\cite{WCYJ25}, several heuristic methods for choosing the
variable order were discussed, but no general theory or analysis is provided for
minimizing the BDD size.
This is one of the key questions that this paper aims to address.

\subsection{Rank-width and linear rank-width}

Here, we introduce the concept of rank-width and its linearized variant, which
are width parameters defined by Oum and Seymour~\cite{OS06}.
Previously, rank-width and linear rank-width have found applications in
characterizing the entanglement structure and the complexity of graph
states~\cite{HEB04,V-d-NDVB07,DW18,AMDS20}.
While related to our approach, these methods are not capable of handling
non-Clifford gates.

Let $G = (V, E)$ be a graph with adjacency matrix
$\vb{A} \in \FF_2^{n\times n}$, and let $X \subset V$ be a subset of its
vertices.
Let $\vb{A}[X, V \backslash X] = (a_{ij}: i \in X, j \in V \backslash X)$ be a
submatrix of $\vb{A}$ of size $|X| \times |V \backslash X|$, formed by the rows
corresponding to the vertices in $X$ and the columns corresponding to the
vertices not in $X$.
Define a rank function $\rho_G$ on vertex subsets of $G$ as
$\rho_G(X) := \rank(\vb{A}[X, V \backslash X])$.
We call $\rho_G$ the cut-rank function of $G$.

A rank decomposition of a graph $G$ is a pair $(T, \mu)$ consisting of a
subcubic tree $T$ and a bijection $\mu: \mathrm{leaf}(T) \to V(G)$.
Here, a subcubic tree is a tree with maximum degree at most $3$.
For every edge $e$ of $T$, removing $e$ from $T$ splits $T$ into two connected
components, which in turn induces a bipartition of the leaves of $T$.
This bipartition of leaves corresponds to a bipartition of the vertices of $G$,
denoted by $X_e$ and $V(G) \setminus X_e$.
The width of an edge $e$ in $T$ is defined as $\rho_G(X_e)$, and the width of a
rank decomposition $(T, \mu)$ is defined as $\max_{e \in E(T)} \rho_G(X_e)$.
The rank-width of a graph $G$, denoted by $\rw(G)$, is the minimum width over
all rank decompositions of $G$, i.e.,
\begin{equation*}
  \rw(G) := \min_{(T, \mu)} \max_{e \in E(T)} \rho_G(X_e).
\end{equation*}

For linear rank-width, we restrict the subcubic tree $T$ to be a caterpillar,
which is a tree that becomes a path when all its leaves are removed.
However, for our purposes, we use the following equivalent definition.
Let $\pi_1, \ldots, \pi_n$ be a linear ordering of $V(G)$, which is a
permutation of the vertices.
The width of this linear ordering is defined as
$\max_{1 \leq i \leq n-1} \rho_G(\{ \pi_1, \ldots, \pi_i\})$.
The linear rank-width of $G$, denoted by $\lrw(G)$, is the minimum width over
all linear orderings of $V(G)$, i.e.,
\begin{equation*}
  \lrw(G) := \min \left\{ \max_{1 \leq i \leq n - 1}
  \rho_G(\{ \pi_1, \ldots, \pi_i\}) \mid \pi_1, \ldots, \pi_n
  \text{ is a linear ordering of } V(G) \right\}.
\end{equation*}

Computing the linear rank-width of a graph is $\NP$-hard in
general~\cite{Oum17}.
Nevertheless, analogous to the treewidth, the problem is fixed-parameter
tractable (FPT).
Recall that a problem is fixed-parameter tractable if it can be solved by an
algorithm with runtime $\order{f(k) \cdot n^c}$, where $n$ is the input size,
$k$ is a parameter of the input, $f$ is a computable function, and $c$ is a
constant independent of $k$.
The function $f$ is generally exponential or grows even faster.
Previously, it was shown that the linear rank-width of a graph is linearly
related to the path-width of a corresponding matroid~\cite{Oum05,HO08,Hli18}.
In addition, Hlin\v{e}n\'y constructed an FPT algorithm with runtime
$\order{f(k) \cdot n^3}$ to decide, for a fixed $k$, whether the path-width of a
matroid is at most $k$, and if so, output a corresponding path
decomposition~\cite{Hli18}.
Altogether, we have the following proposition.

\begin{proposition}\label{prop:fpt-lrw}
  Let $k$ be a positive parameter.
  Given an $n$-vertex graph $G$, we can output a linear ordering of $G$ of width
  at most $k$ or confirm that the linear rank-width of $G$ is larger than $k$ in
  time $\order{f(k) \cdot n^3}$, where $f$ is a computable function.
\end{proposition}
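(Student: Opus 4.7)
The plan is to reduce the problem to computing the path-width of an associated binary matroid and then invoke Hlin\v{e}n\'y's FPT algorithm for matroid path-width, as signalled by the paragraph immediately preceding the statement. First, I would construct from the input graph $G = (V, E)$ with adjacency matrix $\vb{A} \in \FF_2^{n \times n}$ a binary matroid $M_G$ whose ground set is in bijection with $V$ and whose connectivity function agrees with the cut-rank function $\rho_G$ (up to an additive constant). The standard choice, due to Oum~\cite{Oum05,HO08}, is the binary matroid represented by the columns of $[\vb{I} \mid \vb{A}]$ restricted to the vertex columns, for which the cut rank of any set $X \subseteq V$ equals the matroid connectivity of the corresponding subset of ground elements. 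This construction is plainly polynomial in $n$, and yields the identity $\lrw(G) = \pw(M_G)$ up to an additive constant.

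Next, I would feed $M_G$ together with the parameter $k$ (adjusted by the additive slack coming from the correspondence above) into Hlin\v{e}n\'y's algorithm~\cite{Hli18}, which in time $\order{g(k) \cdot n^3}$ for some computable $g$ either produces a path decomposition of $M_G$ of width at most $k$ or certifies that $\pw(M_G) > k$. In the latter case the correspondence immediately yields $\lrw(G) > k$ and we output this conclusion. Otherwise, I would extract a linear ordering $\pi_1, \ldots, \pi_n$ of $V$ by reading off the linear arrangement of the ground set given by the matroid path decomposition and transferring it through the bijection to $V$. The identification of $\rho_G$ with the matroid connectivity function then gives $\rho_G(\{\pi_1, \ldots, \pi_i\}) \le k$ for every $i$, so this ordering witnesses $\lrw(G) \le k$.

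The main technical obstacle is the bookkeeping around the correspondence between $\rho_G$ and the matroid connectivity function: one has to verify the precise normalisation (the additive shift, and the leaf-to-element identification when turning a caterpillar-style path decomposition into a permutation) so that a width-$k$ matroid path decomposition transfers to a linear ordering of width exactly $k$ rather than $k \pm O(1)$. Any unavoidable additive slack can simply be absorbed into the function $f$ in the conclusion, since the statement only demands some computable $f(k)$; the cubic factor $n^3$ is inherited directly from~\cite{Hli18} because both the matroid construction and the extraction of the linear ordering take time polynomial in $n$ independent of $k$. Carrying this program out therefore reduces to citing the explicit correspondence in \cite{Oum05,HO08} together with the algorithm of \cite{Hli18} and checking the stated complexity bound.
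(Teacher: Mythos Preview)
Your proposal is correct and matches the paper's approach exactly: the paper does not give a standalone proof but simply observes in the preceding paragraph that the linear rank-width of $G$ corresponds (up to a linear relation) to the path-width of an associated matroid via~\cite{Oum05,HO08}, and then invokes Hlin\v{e}n\'y's FPT algorithm~\cite{Hli18} to decide and output a path decomposition, which is precisely the reduction you outline. Your discussion of the additive bookkeeping and extraction of the linear ordering fills in details the paper leaves implicit.
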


For the tensor network simulation algorithm, it is well-known that its time and
space complexity is exponential in the treewidth of the underlying graph of the
tensor network~\cite{MS08}.
The treewidth of a graph $G$, denoted as $\tw(G)$, is a parameter measuring how
far $G$ is from being a tree.
The linearized version of treewidth is called path-width, denoted as $\pw(G)$.
There are some known relations between these graph parameters.
First, the linear rank-width is upper bounded by the
path-width~\cite{AK15}: $\lrw(G) \leq \pw(G)$.
Second, the path-width and the treewidth are in some sense
equivalent~\cite{KS93}:
$\pw(G) \leq \order{\log{n} \cdot \tw(G)}$.
Altogether, we have the following relation:
\begin{align}
  \lrw(G) \leq \order{\log{n} \cdot \tw(G)}.
\end{align}
Taking the exponential on both sides, we find that FeynmanDD has at most a
polynomial slowdown compared to tensor network methods.
However, in some cases, the linear rank-width can be much smaller than the
treewidth.
For example, for the complete graph $K_n$, its treewidth is $n-1$ while its
linear rank-width is only $1$.
That is, for some circuits, the FeynmanDD method can be exponentially more
efficient than tensor network methods.
Based on this idea we give a family of random circuits that has bounded linear
rank-width but high treewidth in~\cref{subsec:lrw-construction}.

\section{Time complexity of FeynmanDD}

In this section, we analyze the time complexity of the FeynmanDD method for
simulating quantum circuits, establishing a connection to the linear rank-width
of the underlying circuit graph.
The overall complexity of FeynmanDD is determined by two main steps:
constructing the BDD for the SOP polynomial $f$ and then using this BDD to count
the number of solutions.

Our analysis proceeds in two parts. 
First, we address the counting step, whose time complexity is proportional to
the size of the BDD\@.
We derive a bound on the BDD size, initially for discrete gate sets
\begin{align*}
  \mathcal{T} & = \{\Had, \T, \CZ\}, \\
  \calG & = \{ \sqrt{\X}, \sqrt{\Y}, \sqrt{\W}, \iSWAP \},
\end{align*}
and then extend the analysis to the continuous gate set $\calR$.
While we focus on computing the zero-to-zero amplitude $\mel{0^n}{C}{0^n}$, the
proof generalizes to other simulation tasks like computing acceptance
probabilities or sampling.

Second, we analyze the complexity of constructing the BDD in
\cref{subsec:construction}.
Although this step can be a bottleneck, with a time complexity potentially much
larger than the final BDD size, we show that for the gate sets under
consideration, the construction time is also bounded by the BDD size.

\subsection{Gate set $\mathcal{T}$ and
  $\mathcal{G}$}\label{subsec:discrete-gate-set}

We first consider the gate set $\mathcal{T}$, where the SOP polynomial is a
degree-2 polynomial over $\ZZ_8$ of the form
$f(\vb{x}) = 4 \sum_{i < j} a_{ij} x_i x_j + \sum_i b_i x_i$, with
$a_{ij} \in \{0, 1\}$ and $b_j \in \ZZ_8$.
This polynomial can be written more compactly as
$f(\vb{x}) = 4 \vb{x}^T \vb{A}_{\lup} \vb{x} + \vb{b}^T \vb{x}$, where
$\vb{b} = (b_i)$ is a vector, and the matrix $\vb{A}_{\lup}$ is defined as
${(\vb{A}_{\lup})}_{ij} = a_{ij}$ for $i < j$ and $0$ otherwise.
We can then define a symmetric matrix $\vb{A} = \vb{A}_{\lup} + \vb{A}_{\lup}^T$
with a zero diagonal, which serves as the adjacency matrix of a graph.
We call this the \emph{variable graph} of the quantum circuit $C$, denoted by
$G_C$.
An example of a variable graph is shown in \cref{fig:circuit_and_graph}~(d).
Note that the variable graph $G_C$ is associated with a specific SOP
representation and thus with a particular gate set.
Moreover, the representation $\vb{A}$ depends on the ordering of the variables
$x_1, \ldots, x_n$.
Hence, when writing a concrete matrix $\vb{A}$, an implicit variable ordering is
assumed.
However, the variable graph $G_C$ itself is independent of the variable
ordering; different orderings correspond to isomorphic variable graphs.

After constructing the MTBDD for the SOP polynomial $f$, what remains for
quantum circuit simulation is to count the number of solutions to
$f(\vb{x}) = j$ for $0 \leq j \leq r-1$.
Recall that the time complexity of counting using BDDs is $\order{n \size{f}}$,
where $\size{f}$ is the size of the BDD for $f$.
Below, our focus is on deriving a bound for the BDD size of a degree-2
polynomial, by counting the number of nodes in each level of the BDD\@.
Specifically, we prove that $\size{f}$ is exponential in the linear rank-width
of the variable graph associated with the quantum circuit $C$.

\begin{theorem}\label{thm:bdd-size}
  Suppose we are given a quantum circuit $C$ constructed from the gate set
  $\mathcal{T}$ with an SOP polynomial
  $f(\vb{x}) = 4 \vb{x}^T \vb{A}_{\lup} \vb{x} + \vb{b}^T \vb{x}$ of $n$
  variables.
  The number of nodes in the $i$-th level of the BDD is at most $2^{r_i+3}$,
  where $r_i := \rank(\vb{A}[[i-1], [n]\backslash [i-1]])$.
  In particular, this implies that the BDD size under the variable ordering
  $x_1, \ldots, x_n$ is at most $n \cdot 2^{w+3}$, where $w$ is the width of the
  corresponding linear ordering of the variable graph $G_C$.
\end{theorem}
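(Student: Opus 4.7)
The plan is to bound the number of nodes at level $i$ by counting distinct subfunctions of the form $f[\vb{a}/\vb{u}]$, where $\vb{u} = (x_1, \ldots, x_{i-1})$ and $\vb{a}$ ranges over $\FF_2^{i-1}$; by the level-counting principle of reduced BDDs recalled in \cref{sec:bdd-definition}, this count upper-bounds the node count. I would split the quadratic form $\vb{x}^T \vb{A}_{\lup} \vb{x}$ according to whether the endpoints lie in $[i-1]$ or in $[n] \setminus [i-1]$, and analogously split $\vb{b}^T \vb{x}$. Because $\vb{A}_{\lup}$ is strictly upper triangular and prefix indices precede suffix indices, the mixed block of $\vb{A}_{\lup}$ coincides with $M := \vb{A}[[i-1], [n]\setminus[i-1]]$. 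Substituting $\vb{u} = \vb{a}$ then yields
\begin{equation*}
f[\vb{a}/\vb{u}](\vb{v}) \;=\; c(\vb{a}) \;+\; 4\,\vb{a}^T M\,\vb{v} \;+\; h(\vb{v}) \pmod{8},
\end{equation*}
where $\vb{v} = (x_i, \ldots, x_n)$, $c(\vb{a}) \in \ZZ_8$ gathers all intra-prefix contributions, and $h$ is a fixed polynomial in $\vb{v}$ independent of $\vb{a}$.

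To bound the number of distinct subfunctions as $\vb{a}$ varies, I would control the two $\vb{a}$-dependent pieces separately. The scalar $c(\vb{a})$ takes at most $8$ values. For the linear form, the coefficient of $v_l$ is $4 (\vb{a}^T M)_l \bmod 8$, and since multiplication by $4$ collapses $\ZZ_8$ onto $\{0, 4\}$, this coefficient depends only on $(\vb{a}^T M)_l \bmod 2$. Hence the distinct linear forms, over all choices of $\vb{a}$, are in bijection with the distinct vectors $\vb{a}^T M$ computed in $\FF_2$; these fill out the row span of $M$ and thus have cardinality $2^{\rank(M)} = 2^{r_i}$. Multiplying the two bounds gives at most $8 \cdot 2^{r_i} = 2^{r_i+3}$ distinct subfunctions, establishing the per-level bound.

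For the consequence on total BDD size, I would simply sum over the $n$ levels, using $r_i \leq w$ for every $i$: by definition the width of the linear ordering $x_1, \ldots, x_n$ of $G_C$ is $w = \max_{1 \leq i \leq n-1} \rho_{G_C}([i]) = \max_{2 \leq i \leq n} r_i$, and $r_1 = 0 \leq w$, so the total is at most $n \cdot 2^{w+3}$. The only step that is not essentially bookkeeping is the mod-$8$/mod-$2$ collapse in the cross linear form: this is where the prefactor $4$ in front of the quadratic part of $f$ becomes essential, and it is precisely what makes the governing graph invariant the $\FF_2$-cut-rank underlying linear rank-width, rather than an integer or coefficient-sensitive rank.
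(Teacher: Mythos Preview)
Your proposal is correct and follows essentially the same route as the paper: the paper also splits $f$ into a suffix-only part $f_2$ and a remainder $f_1$, observes that after substitution the $\vb{a}$-dependence reduces to an additive constant in $\ZZ_8$ plus the linear form $4\vb{a}^T \vb{A}_{[i-1]}\vb{x}_{[n]\setminus[i-1]}$, and counts the latter by the size of the $\FF_2$-row space of the cut matrix. Your explicit remark about the mod-$8$/mod-$2$ collapse via the prefactor $4$ is exactly the mechanism the paper uses (and later generalizes to other gate sets).
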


\begin{proof}
  For the $i$-th level, partition $f$ into two parts $f = f_1 + f_2$, where
  $f_2$ is a function of $x_{i}, \ldots, x_n$ only and $f_1$ consists of the
  remaining terms.
  Explicitly,
  \begin{align}
    f_1(\vb{x})
    & = 4 \sum_{j=1}^{i-1} x_{j} (a_{j,j+1} x_{j+1} + \cdots + a_{j,n} x_n) +
      \sum_{j=1}^{i-1} b_j x_j,\label{eq:function_expand_1}\\
    f_2(\vb{x})
    & = 4 \sum_{j=i}^{n-1}x_{j} (a_{j,j+1} x_{j+1} + \cdots + a_{j,n} x_n) +
      \sum_{j=i}^{n} b_j x_j.\label{eq:function_expand_2}
  \end{align}
  In the substitution $f[a_1/x_1, \ldots, a_{i-1}/x_{i-1}]$, since $f_2$ is
  independent of $x_1, \ldots, x_{i-1}$, the number of different functions
  generated is determined by the linear function
  $f_1[a_1/x_1, \ldots, a_{i-1}/x_{i-1}]$.
  The terms in $f_1$ that depend only on $\{ x_1, \ldots, x_{i-1} \}$ will
  jointly contribute an additive constant from $\ZZ_8$ in the substitution.
  Up to this additive constant, we have
  \begin{equation*}
    f_1[a_1/x_1, \ldots, a_{i-1}/x_{i-1}]
    = 4 \vb{a}^T \vb{A}_{[i-1]} \vb{x}_{[n]\backslash[i-1]},
  \end{equation*}
  where $\vb{a} = {(a_1, \ldots, a_{i-1})}^T$,
  $\vb{A}_{[i-1]} := \vb{A}[[i-1], [n]\backslash [i-1]]$, and
  $\vb{x}_{[n]\backslash[i-1]} = {(x_i, \ldots, x_n)}^T$.
  Therefore, when taking the additive constant into account, the number of
  different functions is equal to eight times the number of vectors in the row
  space of $\vb{A}_{[i-1]}$, which is $2^{r_i + 3}$.

  The variable ordering $x_1, \ldots, x_n$ induces a linear ordering of the variable graph $G_C$.
  By definition, the width of this linear ordering is $w = \max_{1 \leq i \leq n-1} r_i$.
  Thus, the BDD size is at most $n \cdot 2^{w+3}$.
\end{proof}

From this theorem, it follows that given a quantum circuit \(C\) and its
variable graph \(G_C\), if a linear ordering of \(G_C\) with small width can be
found, then the corresponding FeynmanDD will also be of small size.
The optimal variable ordering with the smallest BDD size corresponds to the
linear ordering of \(G_C\) with width equal to the linear rank-width
$\lrw(G_C)$.

Although finding an optimal variable ordering is \(\NP\)-hard, the FPT
algorithm from \cref{prop:fpt-lrw} yields an optimal ordering in
time \(\order{f(\lrw(G_C)) \cdot n^3}\), where $f$ is some computable function.

\begin{theorem}\label{thm:fpt}
  Given a quantum circuit $C$ constructed from the gate set $\mathcal{T}$, let
  $f_C: \bin^n \to \ZZ_8$ be its sum-of-powers representation.
  Let $G_C$ be the variable graph of $C$ and let $w = \lrw(G_C)$ be its linear
  rank-width.
  Then, in time $\order{f(w) \cdot n^3}$, where $f$ is a computable function, we
  can output an optimal variable ordering of $f_C$ with BDD size upper bounded
  by $n\cdot 2^{w+3}$.
\end{theorem}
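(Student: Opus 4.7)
The plan is to combine the BDD-size bound from \cref{thm:bdd-size} with the FPT algorithm for linear rank-width from \cref{prop:fpt-lrw}, essentially as a corollary. The only real work is to verify that (i) the variable graph $G_C$ can be built quickly from the SOP, (ii) the unknown parameter $w = \lrw(G_C)$ can be discovered without knowing it in advance, and (iii) the combined runtime can be absorbed into the form $\order{f(w) \cdot n^3}$ for some computable $f$.

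First, I would extract the variable graph $G_C$ from $f_C$. Writing $f_C(\vb{x}) = 4\vb{x}^T \vb{A}_{\lup} \vb{x} + \vb{b}^T \vb{x}$, the adjacency matrix of $G_C$ is $\vb{A} = \vb{A}_{\lup} + \vb{A}_{\lup}^T$, which can be read off directly from the quadratic coefficients of $f_C$ in time $\order{n^2}$. Since $f_C$ itself is obtained from the circuit in polynomial time as described in \cref{sec:bdd-definition}, this preprocessing step is dominated by the subsequent computation.

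Second, I would invoke \cref{prop:fpt-lrw} to produce an optimal linear ordering. Since $w = \lrw(G_C)$ is not known a priori, I would iterate $k = 0, 1, 2, \ldots$, invoking the FPT algorithm for each $k$, and stop at the first value for which it returns a linear ordering rather than a negative certificate. That value is exactly $w$, and the returned ordering $\pi_1, \ldots, \pi_n$ achieves width $w$. The total time is
\begin{equation*}
  \sum_{k=0}^{w} \order{f_0(k) \cdot n^3} \;\leq\; (w+1)\cdot f_0(w)\cdot \order{n^3} \;=\; \order{f(w)\cdot n^3},
\end{equation*}
where $f_0$ is the function from \cref{prop:fpt-lrw} and $f(w) := (w+1) f_0(w)$ is a computable function. (If preferred, binary search over $k \in [0, n]$ gives the same asymptotics.)

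Third, I would use this ordering $\pi_1, \ldots, \pi_n$ as the BDD variable ordering for $f_C$. By the construction of $G_C$, for every $i$ the quantity $r_i$ appearing in \cref{thm:bdd-size} equals $\rho_{G_C}(\{\pi_1, \ldots, \pi_{i-1}\})$, which is bounded by the width $w$ of the linear ordering. Applying \cref{thm:bdd-size} then immediately yields a BDD size at most $n \cdot 2^{w+3}$, and this ordering is optimal because any smaller width would contradict $w = \lrw(G_C)$. No step presents a real obstacle: the hardest ingredient, the matroid-path-width FPT algorithm underlying \cref{prop:fpt-lrw}, is invoked as a black box, and the only subtlety is the unknown-$w$ search, which is handled by the incremental loop above.
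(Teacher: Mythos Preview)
Your proposal is correct and mirrors the paper's own proof almost exactly: iterate the FPT decision procedure of \cref{prop:fpt-lrw} for $k=1,2,\ldots$ until it first succeeds at $k=w$, then feed the returned ordering into \cref{thm:bdd-size}. Your version adds a bit more detail (graph extraction, the explicit sum bound, the optimality remark), but the argument is the same.
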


\begin{proof}
  Using the FPT algorithm in \cref{prop:fpt-lrw}, we can find a linear ordering
  of $G_C$ with width at most $k$ or confirm that the linear rank-width of $G_C$
  is larger than $k$ in time $\order{f(k) \cdot n^3}$.
  Iterate this FPT algorithm for $k = 1, 2$ and so on until a linear ordering is
  output, which happens when $k = w$.
  This will find an optimal linear ordering of width $w$ of $G_C$ in time
  $\order{f(w) \cdot n^3}$.
  Such an optimal linear ordering of $G_C$ gives a variable ordering of $f_C$
  with BDD size at most $n \cdot 2^{w+3}$ by \cref{thm:bdd-size}.
\end{proof}

\begin{table}[t]
  \centering
  \begin{tabular}{lllll}
    \toprule
    Gate & Input & Output & Representation & Factor \\
    \midrule
    $\sqrt{\X}$ & $x_{0}$ & $x_{1}$
      & $\omega^{18 x_{0} + 18 x_{1} + 12 x_{0} x_{1}}_{24}$ & $1 / \sqrt{2}$\\
    $\sqrt{\Y}$ & $x_{0}$ & $x_{1}$
      & $\omega^{12 x_{0} + 12 x_{0} x_{1}}_{24}$ & $1 / \sqrt{2}$\\
    $\sqrt{\W}$ & $x_{0}$ & $x_{1}$
      & $\omega^{15 x_{0} + 21 x_{1} + 12 x_{0} x_{1}}_{24}$ & $1 / \sqrt{2}$\\
    $\iSWAP$ & $x_{0}, x_{1}\quad$ & $x_{1}, x_{0}\quad$
      & $\omega^{18 x_{0} + 18 x_{1} + 12 x_{0} x_{1}}_{24}\quad$ & $1$\\
    \bottomrule
  \end{tabular}
  \caption{Sum-of-powers representation for the Google supremacy gate
    set.}\label{tab:google}
\end{table}

The above results also apply to the gate set $\calG$.
The matrix elements of the gates in $\calG$ can be expressed as powers of
$\omega_{24}$ (see \cref{tab:google}).
All gates in $\calG$ are non-diagonal.
From this table, it can be seen that the SOP polynomial for a quantum circuit
constructed from $\calG$ takes the form
$f(\vb{x}) = 12 \vb{x}^T \vb{A}_{\lup} \vb{x} + \vb{b}^T \vb{x}$, where
$\vb{A} \in \FF_2^{n\times n}$ and $\vb{b} \in \ZZ_{24}^n$.
The variable graph of a quantum circuit constructed from $\calG$ can be
similarly defined with $\vb{A}$ as its adjacency matrix.
Thus, following the same proof as in \cref{thm:bdd-size}, we find that the BDD
size under a fixed variable ordering is at most $24 n 2^{w}$, where $w$ is the
linear rank-width of the corresponding linear ordering of the variable graph.

\subsection{Gate set $\calR$}

Here, we aim to demonstrate that FeynmanDD can also handle the continuous gate
set $\calR$ with acceptable overhead.
For a quantum circuit $C$ constructed from $\calR$, we first introduce the
\emph{factor graph}, which is a complex undirected graphical model~\cite{BISN18} and a
variant of the tensor network representation proposed in~\cite{MS08}.
We then show that a simplified version of the factor graph, commonly used in the
literature, is equivalent to the variable graph defined above.
Finally, we prove that compiling $C$ into a quantum circuit $C'$ with gate set
$\calT$ increases the linear rank-width of the variable graph by at most $2$.

\subsubsection{Factor graph, tensor network and variable graph}

We first recall the undirected graphical model in~\cite{BISN18}, where the
underlying graphical model is the factor graph of a quantum circuit.
The factor graph of $C$ arises from the Feynman path integral formalism.
In \cref{eq:path-sum}, each gate will contribute a factor defined by
$\mel{\vb{y}_j}{U_j}{\vb{y}_{j-1}}$.
For a single-qubit gate $U$, the factor is given by
$\phi_U(x, y) = \mel{x}{U}{y}$, where we use $x, y \in \bin$ to denote the input
and output variables of $U$.
For a two-qubit gate $U$, the factor is given by
$\phi_U(x_1, x_2, y_1, y_2) = \mel{y_1, y_2}{U}{x_1, x_2}$, where $x_1, x_2$ and
$y_1, y_2$ are the input and output variables of $U$, respectively.
Then, the factor graph of $C$ is defined such that each binary variable
corresponds to a vertex in the factor graph and each factor corresponds to a
clique.
The summation over the binary variables in \cref{eq:path-sum} corresponds to
eliminating the vertices and connecting the neighbors of the eliminated vertices
in the factor graph.

Next, we show that the factor graph and tensor network are, in some sense,
equivalent.
The tensor network representation can also be derived from the Feynman path
integral formalism.
Each factor can be viewed as a tensor.
A single-qubit gate is represented by a tensor with two legs, and a two-qubit
gate is represented by a tensor with four legs.
Summation over the binary variables corresponds to connecting the legs of the
tensors, known as tensor contraction.
In \cref{fig:circuit_and_graph}, subfigure~(c) represents the tensor network of
the quantum circuit shown in (b).

Given a tensor network $G$, its \emph{line graph} $G^*$ is constructed with a
precise vertex-edge correspondence.
The vertex set of $G^*$ is in one-to-one correspondence with the edge set of
$G$.
Two vertices $e_1$ and $e_2$ in $G^*$ are connected if there exists a vertex $v$
in $G$ to which both edges are incident.
The line graph for the tensor network in \cref{fig:circuit_and_graph}~(c) is
shown in (f).

From this definition, one can observe that the factor graph defined above is
exactly the line graph of the tensor network representation of $C$.
Moreover, the summation in the Feynman path integral corresponds to contracting
the tensor network or eliminating the vertices in the factor graph, whose
complexity is exponential in the treewidth of the underlying
graph~\cite{MS08,BISN18}.
Since the treewidth of the line graph is linearly related to the treewidth of
the original graph, the undirected graphical model method in~\cite{BISN18} is in
some sense equivalent to the tensor network method.

In~\cite{BISN18} however, a simplified version of the factor graph is actually
used, where diagonal two-qubit gates such as $\CZ$ are represented by factors
with two variables, and diagonal single-qubit gates like $T$ are represented by
factors with one variable~\cite{BISN18}.
In this simplification, the factor for $\CZ$ is identical to that of the
Hadamard gate, up to a multiplicative factor.
In the factor graph, a diagonal single-qubit gate corresponds to a single node,
and a diagonal two-qubit gate corresponds to an edge (see
\cref{fig:circuit_and_graph}~(e)).
One can observe that if the quantum circuit is constructed from $\mathcal{T}$,
the factor graph is exactly the same as the variable graph.
For this reason, we also denote the factor graph of a quantum circuit $C$ as
$G_C$ and simply refer to it as the variable graph of $C$.
However, as we will see later, despite this connection, the FeynmanDD method is
fundamentally different from the tensor network method or the undirected
graphical model method.

\subsubsection{Compiling continuous gates into discrete gates}

Below, we prove that for a quantum circuit with gate set $\calR$, compiling it
into a sequence of gates from $\calT$ only increases the linear rank-width of
the variable graph by at most $2$.

\begin{figure}[t]
  \centering
  \includegraphics[width=\textwidth]{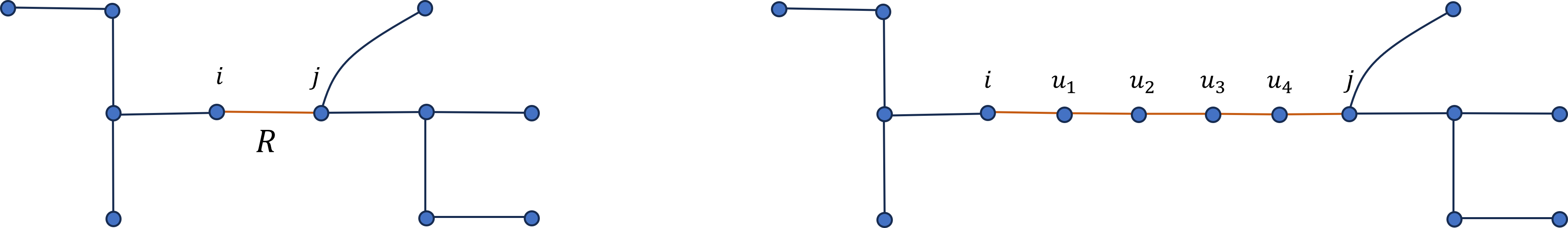}
  \caption{Let $C$ be a quantum circuit constructed from gate set $\calR$, with
    variable graph $G_C$ (left panel).
    Let $C'$ be a quantum circuit compiled from $C$ using the Solovay-Kitaev
    algorithm, with variable graph $G_{C'}$ (right
    panel).}\label{fig:variable_graph}
\end{figure}

\begin{theorem}
  Given a quantum circuit $C$ with gate set $\calR$ and its variable graph
  $G_C$.
  Apply the Solovay-Kitaev algorithm to decompose it into a quantum circuit $C'$
  with gate set $\calT$.
  Let $G_{C'}$ be the variable graph of $C'$.
  Then, $\lrw(G_{C'}) \leq \lrw(G_C) + 2$.
\end{theorem}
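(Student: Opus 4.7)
The plan is to take an optimal linear ordering $\pi=(\pi_1,\ldots,\pi_n)$ of $V(G_C)$ achieving $\lrw(G_C)=w$, build a linear ordering $\pi'$ of $V(G_{C'})$ by interleaving the chain vertices introduced by compilation, and then bound its width by $w+2$ via a cut-rank analysis.

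First, I would describe how Solovay--Kitaev compilation affects the variable graph. Each continuous single-qubit rotation in $C$ is decomposed into a finite sequence of $\Had$ and $\T$ gates. On the variable graph, each $\Had$ introduces a fresh intermediate variable and a new SQG edge on its qubit line, while each $\T$ only contributes a linear term to the SOP polynomial and leaves the graph untouched. Consequently $G_{C'}$ is obtained from $G_C$ by subdividing each SQG edge into a path of one or more edges; the CZ edges are unchanged, and since the SQG subgraph of $G_C$ is already a disjoint union of qubit-line paths, every vertex of $G_{C'}$ retains SQG-degree at most $2$.

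Next, I would construct $\pi'$ block by block. For each $i$, let block $B_i$ consist of $\pi_i$ followed by the intermediate vertices of every chain corresponding to an SQG edge $(\pi_i,\pi_k)\in E(G_C)$ with $k>i$, listed in natural chain order and in some fixed order across different such $k$. Concatenating, $\pi'=B_1B_2\cdots B_n$. Since SQG-degree is at most $2$, each $\pi_i$ emits at most two outgoing chains in $B_i$, and symmetrically receives at most two incoming chains which were placed in earlier blocks.

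Finally, I would bound the cut rank at any cut of $\pi'$ by $w+2$ via a two-case analysis. In \emph{Case A}, the cut lies at a block boundary $X'=B_1\cup\cdots\cup B_{i-1}$, corresponding to the $\pi$-prefix $X=\{\pi_1,\ldots,\pi_{i-1}\}$. The nonzero rows of the $G_{C'}$-cut matrix are (i) rows indexed by $\pi_j\in X$, carrying only crossing CZ-entries (since every SQG-neighbor of $\pi_j$ in $G_{C'}$ is an already placed chain vertex inside $X'$), and (ii) rows indexed by the last intermediate $w_m$ of each chain whose earlier endpoint lies in $X$ and whose later endpoint lies outside $X$, each of the form $e_{\pi_k}$ for the target $\pi_k\notin X$. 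Applying column operations that fold the $w_m$-contributions back into the corresponding $\pi_k$-columns recovers a submatrix equivalent to the $G_C$-cut matrix at $X$, plus a residual term whose rank is at most $2$; the SQG-degree-$\le 2$ property is what keeps the residual bounded, as it forces duplicate $w_m$-rows targeting the same $\pi_k$ to collapse. In \emph{Case B}, the cut lies inside some block $B_i$: relative to the nearest Case A cut, at most one additional nonzero row of the form $e_{w_{p+1}}$ appears (from the last placed vertex $w_p$ of the chain currently being inserted), and $\pi_i$'s row gains entries at the $w_1$'s of unstarted outgoing chains; careful bookkeeping shows that the overall cut rank still fits within $w+2$.

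The main obstacle is the rank-splitting analysis in Case A: naively, redistributing the SQG-entries of $G_C$ across separate chain rows could inflate the cut rank. What tames this growth is the disjoint-paths structure of the SQG subgraph together with the SQG-degree-$\le 2$ constraint, which together force the $w_m$-rows targeting any fixed $\pi_k$ to coincide and limit the number of independent new directions introduced by the chain subdivision to two. Reconciling this with the CZ contributions through column operations to obtain the tight $+2$ bound is the crux of the argument, and the Case B analysis is comparatively routine once Case A is settled.
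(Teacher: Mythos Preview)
Your overall strategy---take an optimal linear ordering of $G_C$, insert each chain of Solovay--Kitaev intermediates immediately after the earlier endpoint of the subdivided edge, then bound the cut rank position by position---is exactly the paper's approach. The paper carries out the block-matrix rank analysis carefully for a single chain (obtaining the $+2$ and $+1$ increments in the different ranges of $t$) and then asserts the multi-chain case ``by the same line of reasoning''; your block decomposition $B_1,\ldots,B_n$ is the natural way to phrase the general case directly.

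There is, however, a real gap in your Case~A argument, and it is the same spot the paper glosses over. You claim that at a block-boundary cut the residual beyond the $G_C$-cut matrix has rank at most $2$, and you attribute this to the SQG-degree-$\le 2$ property ``forcing duplicate $w_m$-rows targeting the same $\pi_k$ to collapse''. But degree~$\le 2$ only bounds how many chain-rows hit a \emph{fixed} target $\pi_k$; it says nothing about the number of \emph{distinct} targets. At the block-boundary cut $X'=B_1\cup\cdots\cup B_{i-1}$, the extra rows are $e_{\pi_k}$, one for each crossing SQG edge $(\pi_j,\pi_k)$ with $j<i\le k$, and the number of distinct $\pi_k$'s appearing can be as large as the number of crossing SQG edges---which is not bounded by $2$. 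Concretely, for an abstract graph on $\{a_1,a_2,a_3,b_1,b_2,b_3,c\}$ with SQG edges $(a_i,b_i)$ and CZ edges $(a_i,b_j)$ for $i\ne j$ together with $(a_i,c)$, the ordering $a_1,a_2,a_3,b_1,b_2,b_3,c$ has $G_C$-width $1$, yet your derived ordering of the subdivided graph has cut rank $4$ at the boundary after $B_3$: the rows $e_{b_1},e_{b_2},e_{b_3}$ together with any $a_i$'s CZ-only row span a $4$-dimensional space. So the ``$+2$'' does not follow from degree~$\le 2$ alone; what rules out such configurations is the temporal consistency forced by the circuit (in the example above, the CZ edges $(a_1,b_2)$ and $(a_2,b_1)$ are mutually incompatible in any actual circuit), and neither your argument nor the paper's invokes this.
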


\begin{proof}
  Suppose $\pi_1, \ldots, \pi_n$ is a linear ordering of $G_C$ with width
  $w = \lrw(G_C)$.
  Consider a single-qubit rotation gate $R$ in $C$ with input and output
  variables $x_i$ and $x_j$, respectively.
  Under the variable ordering $\pi$, $x_i$ is in the $\pi_i$-th position and
  $x_j$ is in the $\pi_j$-th position.
  Without loss of generality, assume $\pi_i < \pi_j$.
  After applying the Solovay-Kitaev algorithm, $R$ is decomposed into a sequence
  of gates from $\calT$, $R = \Had \T \Had \T \T \Had \cdots \T \Had$.
  Suppose this introduces $k-1$ new variables $u_1, \ldots, u_{k-1}$, where $k$
  is the number of Hadamard gates.
  See \cref{fig:variable_graph} for an illustration.
  Let the adjacency matrix of $G_C$ under this variable ordering be $\vb{A}$,
  \begin{align}
    \vb{A} =
    \begin{pmatrix}
      \vb{A}_1 & \vb{A}_2 \\
      \vb{A}_2^T & \vb{A}_3
    \end{pmatrix} \ ,
  \end{align}
  where $\vb{A}_1$ is a $\pi_i \times \pi_i$ matrix and the $\pi_i$-th row and
  column correspond to the variable $x_i$.
  Consider a linear ordering of $G_{C'}$ as
  \begin{align}
    \pi_1, \ldots, \pi_i, u_1, \ldots, u_{k-1}, \pi_{i+1}, \ldots, \pi_j, \ldots, \pi_n,
  \end{align}
  where the new variables are inserted after $\pi_i$.
  Let $\vb{A}'$ be the adjacency matrix of $G_{C'}$ under this variable
  ordering.
  Then, we have
  \begin{align}\label{eq:A_prime}
    \vb{A}' =
    \begin{pmatrix}
      \vb{A}_1 & \vb{B}_1 & {\vb{A}'}_2 \\
      \vb{B}_1^T & \vb{B}_2 & \vb{B}_3 \\
      {\vb{A}'}_2^T & \vb{B}_3^T & \vb{A}_3
    \end{pmatrix},
  \end{align}
  where $\vb{B}_1$ has a single $1$ in the bottom-left entry and zeros
  elsewhere, $\vb{B}_3$ has a single $1$ in the last row in the position
  corresponding to $\pi_j$ and zeros elsewhere, ${\vb{A}'}_2$ is modified from $\vb{A}_2$ by flipping the entry corresponding to $\pi_j$ in the last row to be zero, and $\vb{B}_2$ is a tridiagonal
  matrix with ones on the superdiagonal and subdiagonal and zeros elsewhere.
  For the example in the figure, we have
  \begin{align}
    \vb{B}_2 =
    \begin{pmatrix}
      0 & 1 & 0 & 0 \\
      1 & 0 & 1 & 0 \\
      0 & 1 & 0 & 1 \\
      0 & 0 & 1 & 0
    \end{pmatrix} \ .
  \end{align}
  To determine the linear rank-width of $G_{C'}$ under this ordering, we examine
  the rank of the submatrices $\vb{A}'[[t], [n+k-1]\backslash [t]]$ for
  $t = 1, \ldots, n+k-2$:
  \begin{itemize}
    \item For $t < \pi_i$, the rank equals
          $\rank(\vb{A}[[t], [n]\backslash [t]]) \leq w$.
    \item For $\pi_i \leq t < \pi_i+k-1$, the rank is at most
          $\rank(\vb{A}_2) + 2 \leq w + 2$, since the
          columns of the new submatrix contribute at most two additional
          linearly independent vectors.
    \item For $\pi_i+k-1 \leq t \leq \pi_j+k-2$, the rank is at most
          $\rank(\vb{A}[[t-k+1], [n]\backslash [t-k+1]]) + 1 \leq w + 1$, since the
          columns of the new submatrix contribute at most one additional
          linearly independent vector.
    \item For $t > \pi_j+k-2$, the rank equals
          $\rank(\vb{A}[[t-k+1], [n]\backslash [t-k+1]]) \leq w$.
  \end{itemize}
  Thus, the linear rank-width of $G_{C'}$ is at most $w + 2$.

  Now, consider the general case where all single-qubit rotation gates are
  decomposed.
  One can imagine that $\vb{A}'$ will have a similar block structure as in
  \cref{eq:A_prime}, but with more $\vb{B}_1$, $\vb{B}_2$ and $\vb{B}_3$ blocks inserted.
  By the same line of reasoning, the rank of the submatrices
  $\vb{A}'[[t], [n+k-1]\backslash [t]]$ will always increase by at most $2$.
  Thus, we conclude that $\lrw(G_{C'}) \leq \lrw(G_C) + 2$.
\end{proof}

\subsection{Time complexity of building decision diagrams}\label{subsec:construction}

In \cref{thm:bdd-size,thm:fpt}, we established an upper bound of
$n \cdot 2^{w+3}$ on the size of the decision diagram for a quantum circuit and
showed that an optimal variable ordering can be obtained in
$\order{f(w) \cdot n^3}$ time.
In this section, we analyze the time complexity of building a decision diagram
starting from a SOP representation of the circuit.
In~\cite{WCYJ25}, the underlying decision diagram is built using the apply
functionality of the decision diagram data structure, and a heuristic method
called the \emph{binary synthesis method} is employed to speed up the
construction.
Although the binary synthesis method supports more gate sets and is practically
efficient, as verified by numerical simulations in~\cite{WCYJ25}, its complexity
is difficult to analyze.
Here, we exploit the special low-degree polynomial structure of the SOP and
provide an alternative method for building the decision diagram in a
level-by-level fashion.
This direct method allows us to show that the construction of the decision
diagram is not the bottleneck of the FeynmanDD simulation method.

\begin{theorem}\label{thm:bdd-build-time-complexity}
  Given a quantum circuit $C$ constructed from the gate set $\mathcal{T}$, its
  corresponding sum-of-powers representation $f_C$ and a variable ordering such
  that the linear rank-width of its variable graph is $w$, the corresponding
  decision diagram can be constructed in time $\order{n^2 \cdot 2^{w}}$.
\end{theorem}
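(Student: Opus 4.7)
The plan is to build the decision diagram level by level, exploiting the explicit normal form for level-$i$ subfunctions that is implicit in the proof of \cref{thm:bdd-size}. Rather than invoking generic decision-diagram apply operations whose cost is hard to bound, I will maintain at each level a hash table of distinct subfunctions stored in a compact parametric form, and propagate them downward by a direct substitution rule that costs only $\order{n}$ per node.

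Concretely, after substituting $x_1 = a_1, \ldots, x_{i-1} = a_{i-1}$ in $f_C$, the subfunction takes the form
\begin{equation*}
  g_i(x_i, \ldots, x_n) = c + 4 \sum_{k=i}^{n} w_k x_k + f_2(x_i, \ldots, x_n),
\end{equation*}
where $f_2$ is fixed by the level, $c \in \ZZ_8$ is an accumulated constant, and $\vb{w} = (w_i, \ldots, w_n) \in \FF_2^{n-i+1}$ is the row-space projection $\vb{a}^T \vb{A}[[i-1], \{i, \ldots, n\}]$. Since $f_2$ is identical for every level-$i$ node, the pair $(\vb{w}, c)$ is a faithful representation of the subfunction, so two nodes coalesce in the reduced BDD precisely when their pairs coincide. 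Starting from the root $(\vb{0}, 0)$ at level $1$, I then derive the closed-form recurrences for the two children produced by setting $x_i = a_i$:
\begin{align*}
  c' &= c + (4 w_i + b_i)\, a_i \pmod{8}, \\
  w'_k &= w_k + a_i\, a_{ik} \pmod{2} \quad \text{for } k > i,
\end{align*}
obtained by expanding $g_i$ and absorbing the substitution into the constant and into the degree-$2$ cross terms of $f_2$. Duplicates at the next level are detected by hashing $(\vb{w}', c')$ in the level-$(i+1)$ table, and a node whose two children coincide is simply dropped, recovering the standard reduction.

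By \cref{thm:bdd-size}, at most $2^{w+3}$ distinct subfunctions appear at any of the $n$ levels. Each node is processed in $\order{n}$ time: an $\FF_2$-XOR against at most one row of $\vb{A}$ to update $\vb{w}'$, a constant-time update of $c'$, and an $\order{n}$-time hash of the resulting pair. Summing over all $n$ levels and both children yields the claimed runtime $\order{n^2 \cdot 2^w}$. The terminal layer is reached at level $n+1$, where every surviving subfunction reduces to a pure constant in $\ZZ_8$, giving at most eight terminals. The one point that requires real care is to verify that the per-node work is genuinely linear, rather than quadratic, in $n$; this holds because $f_2$ is never rematerialized (it lives implicitly in the level index) and because the recurrences touch only the $(n-i)$-dimensional suffix of $\vb{w}$ together with a single row of $\vb{A}$. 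I do not anticipate any further obstacle: the sparse, degree-$2$ structure of the SOP polynomial is exactly what makes expensive Boolean composition unnecessary here.
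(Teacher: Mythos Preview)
Your proposal is correct and follows essentially the same level-by-level construction as the paper, with the same parametric subfunction representation, the same closed-form child recurrences, and hash-based deduplication bounded via \cref{thm:bdd-size}. You simplify slightly by hashing the full coefficient vector $\vb{w}$ directly rather than projecting onto RREF coordinates of $\vb{A}_{[i-1]}$ as the paper does, and you gloss over the level-skipping needed for a fully reduced diagram (the paper determines the correct target level for each child before insertion, whereas your ``drop if children coincide'' needs a post-pass or pointer redirection), but neither point affects the $\order{n^2\cdot 2^w}$ bound.
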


\begin{proof}
  We employ a constructive proof derived by analyzing the time complexity of the
  algorithm used in the FeynmanDD construction process.
  The overall procedure of the algorithm is outlined in
  \cref{alg:dd-construction}.

  The construction proceeds top-down, building the decision diagram level by
  level according to the variable ordering $x_1, x_2, \ldots, x_n$.
  At each step, a non-leaf node is visited, and its connections to its child
  nodes are computed directly using the special form of the Boolean function
  $f_C$.
  Upon completing the traversal, the desired reduced ordered BDD is obtained.
  The root node at level $1$ represents the entire function $f_C$.
  A node at level $i$ for $2 \le i \le n + 1$ represents a subfunction
  $f_C[a_1/x_1, \ldots, a_{i-1}/x_{i-1}]$, obtained by assigning values to the
  first $i-1$ variables.
  The nodes in the $(n+1)$-th level are the leaf nodes of the BDD, each
  representing a constant $\gamma$ modulo $r$.
  When the circuit $C$ uses the gate set $\mathcal{T}$, $\gamma \in \ZZ_8$.

  The key to ensuring that the number of nodes in the decision diagram matches
  the theoretical bound is to avoid generating redundant nodes.
  To achieve this, we use a unique table~\cite{Bry86} to store nodes: when considering a
  child of a parent node, we first check whether the subfunction represented by
  the child is already represented by an existing node in the decision diagram.
  If not, a new node is created; otherwise, the existing node is retrieved and
  used as the child.
  The parent's corresponding child pointer is then set accordingly.

  To complete the above strategy, an important technical detail is how we
  represent the functions for each node, which we present below.
  Recall the discussion from the proof of \cref{thm:bdd-size}.
  Let $\vb{A}$ be the adjacency matrix of the variable graph of $C$ under the
  given variable order.
  For a node at level $i$, the function $f$ represented by each such node can be
  decomposed as $f = f_1 + f_2$, where $f_2$ is a function of $x_i, \ldots, x_n$
  only and $f_1$ consists of the remaining terms as in
  \cref{eq:function_expand_1,eq:function_expand_2}.

  For the root node level, $f_1 = 0$ and $f_2 = f_C$.
  More generally, after considering substitutions
  $f[a_1/x_1, \ldots, a_{i-1}/x_{i-1}]$, $f$ can essentially be rewritten as
  \begin{equation*}
    4\vb{a}^T \vb{A}_{[i-1]} \vb{x}_{[n]\backslash[i-1]} + f_2(\vb{x}) + \gamma_i,
  \end{equation*}
  where $\vb{a} = {(a_1, \ldots, a_{i-1})}^T$,
  $\vb{A}_{[i-1]} := \vb{A}[[i-1], [n]\backslash [i-1]]$,
  $\vb{x}_{[n]\backslash[i-1]} = {(x_i, \ldots, x_n)}^T$, and the constant
  $\gamma_i$ is derived from the term $\sum_{j=1}^{i-1} b_j a_j$ in
  $f_1(\vb{x})$.
  If the original function $f_C$ contains a global constant, it can also be
  incorporated into $\gamma_i$.
  As analyzed in \cref{thm:bdd-size}, there are two places where differences may
  arise in the functions represented by different nodes at level $i$: (1) The
  constant $\gamma_i$; (2) The linear terms involving $x_i$ through $x_n$ that
  are generated within $f_1$.
  It is worth emphasizing that $f_2$ also contains linear terms, but these terms
  are identical for all nodes at level $i$.
  The coefficients of the terms mentioned in (2) can be linearly represented by
  the basis of the row space of $\vb{A}_{[i-1]}$:
  \begin{equation}\label{eq:linear-repre}
    \vb{a}^T \vb{A}_{[i-1]} = \sum_{j=1}^{r_i} c_j \vb{d}_j,
  \end{equation}
  where $r_i := \rank(\vb{A}_{[i-1]}), {\{\vb{d}_j\}}_{[i-1]}$ is a basis for
  the row space of $\vb{A}_{[i-1]}$, and $c_j \in \bin$.
  For simplicity in decomposition, the basis ${\{\vb{d}_j\}}_{[i-1]}$ is chosen
  to be the set of nonzero rows in the row-reduced echelon form (RREF) of
  $\vb{A}_{[i-1]}$.
  It can be noted that we omit the leading factor of $4$ in
  \cref{eq:linear-repre}, which allows us to represent $c_j$ using $\bin$
  instead of $\{0,4\}$.
  This simplification relies on the condition that the quadratic coefficients in
  the gate set must be equal to $r/2$.
  Clearly, both the gate set $\mathcal{T}$ and the gate set $\mathcal{G}$
  satisfy this condition.

  Consider the tuple $(\gamma_i, c_1, \ldots, c_{r_i})$, which is uniquely
  determined by $\vb{a}$ and serves to distinguish different functions at level
  $i$.
  Different assignments $\vb{a}$ may yield the same tuple; in the context of
  decision diagrams, this indicates that the corresponding variable assignments
  lead to the same subfunction $f_C[a_1/x_1, \ldots, a_{i-1}/x_{i-1}]$.
  We maintain a lookup table for each level to index its nodes.
  When the construction process is about to add a node at level $i$, we query
  the table using the corresponding tuple.
  If a node with the same tuple already exists, we simply link the parent's zero
  (or one) child pointer to that node, avoiding redundant creation.
  Various data structures can be used to implement this functionality, such as a
  hash table as already observed in~\cite{Bry86}, where it is argued that the
  cost of retrieval of an element from a unique table is $O(1)$ on average.
  If a worst-case bound is required, one may use an array to pre-allocate all
  possible nodes indexed by the subfunctions.
  The number of entries in the table or array for each level is at most
  $2^{r_i + 3}$.

  Next, we briefly describe how to obtain the ranks and basis vectors for the
  row spaces of $\vb{A}_{[i-1]} $ for $2\le i \le n$.
  It is well-known that the time complexity for computing the RREF of an
  $n\times m$ matrix is $\order{nm\min\{n, m\}}$.
  Since we need to compute the RREF for each $\vb{A}_{[i-1]}$, we can reuse the
  results from the previous level.
  The transition from $\vb{A}_{[i-1]}$ to $\vb{A}_{[i]}$ involves removing the
  column corresponding to the variable $x_i$ and adding the row corresponding to
  the variable $x_i$.
  The computation for the initial matrix $\vb{A}_{[1]}$ is trivial.
  Once we have solved for $\vb{A}_{[i-1]}$ and obtained the matrix
  $\vb{A}_{[i-1]}'$ consisting of the non-zero rows of its RREF, we can remove
  the first column of $\vb{A}_{[i-1]}'$ and append the new row for $x_i$.
  Thus, computing the RREF of $\vb{A}_{[i]}$ only requires computing the RREF of
  an $(n - i)\times (r_i + 1)$ matrix, whose time complexity is
  $\order{nr_i^2}$.
  Since the rank $r_i$ is bounded by the linear rank-width $w$, the total time
  complexity across all levels is $\order{n^2w^2}$.

  When decomposing $\vb{a}^T \vb{A}_{[i-1]}$ using \cref{eq:linear-repre}, we
  define $p_j$ as the pivot position of $\vb{d}_j$, i.e., the index where the
  first 1 occurs.
  Based on the properties of $\ZZ_2$ arithmetic and the RREF, the coefficient
  $c_j$ is given by $c_j = {(\vb{a}^T \vb{A}_{[i-1]})}_{p_j}$, meaning each
  coefficient $c_j$ equals the entry of $\vb{a}^T \vb{A}_{[i-1]}$ at the pivot
  position for $\vb{d}_j$.
  Extracting the set $\{c_j\}$ can be completed in $\order{r_i}$ time.
  In some special cases, the matrix $\vb{A}_{[i-1]}$ has rank 0, in which case 
  the tuple contains only $\gamma_i$ and we do not need to perform the 
  decomposition.

  We now describe how to derive the child nodes from the parent node.
  For clarity, we rewrite the expression
  $\vb{a}^T \vb{A}_{[i-1]} \vb{x}_{[n]\backslash[i-1]}$ as:
  \begin{equation*}
    g_i(x_i, \ldots, x_n) = \vb{a}^T \vb{A}_{[i-1]} \vb{x}_{[n]\backslash[i-1]} = \sum_{k=i}^n \alpha_k x_k.
  \end{equation*}
  For a node at level $i$, we maintain $g_i(x_i, \ldots, x_n)$ and $\gamma_i$.
  When generating the child nodes, the terms in $f_2$ will have an effect.
  It is easy to see that when $x_i = 0$,
  \begin{align*}
    g_{i,0}(x_{i+1}, \ldots , x_n)
    & = \sum_{k=i+1}^n \alpha_k x_k \\
    \gamma_{i,0}
    & = \gamma_i
  \end{align*}
  and when $x_i = 1$,
  \begin{align*}
    g_{i,1}(x_{i+1}, \ldots , x_n)
    & = \sum_{k=i+1}^n (\alpha_k  + a_{i,k}) x_k \\
    \gamma_{i,1}
    & = \gamma_i + 4\alpha_i + b_i
  \end{align*}
  The addition of the coefficients in $g$ is performed modulo $2$, while the
  addition for $\gamma$ is performed modulo $8$.

  To directly obtain a \emph{reduced} diagram, we require that, for any non-leaf
  level $i$, the function represented by a node in level $i$ must contain at
  least one term involving the variable $x_i$.
  If a node's function does not essentially depend on $x_i$, this node should be
  placed at a lower level, which means that, in \cref{alg:dd-construction}, we
  must first determine the correct level corresponding to the child node before
  proceeding with any index lookup or insertion.
  Taking the one child node of a level $i$ node as an example, we first
  calculate $g_{i,1}$ and $\gamma_{i,1}$ for it.
  The condition for the child node to be placed in level $i+1$ is that
  $(4\alpha_{i+1} + b_{i+1}) \not\equiv 0 \pmod{8}$ or that $f_2(\vb{x})$
  contains a quadratic term involving $x_{i+1}$.
  If neither of these conditions is satisfied, the check should continue to the
  next level.
  In subsequent checks, we simply ignore the coefficient of $x_{i+1}$ in
  $g_{i,1}$.
  This process continues until the leaf node level is reached.
  For the nodes in the leaf level (i.e., level $n+1$), we directly use the
  constant $\gamma$ for indexing.

  The two nested for-loops traverse all nodes in the decision diagram.
  For each node, the computation of child nodes' tuple, tuple lookups, and child
  node insertions can all be performed in $\order{n}$ time.
  Thus, the total time complexity for constructing the Decision Diagram is
  $\order{n^2 \cdot 2^{w+3} + n^2w^2}$, which asymptotically simplifies to
  $\order{n^2 \cdot 2^{w}}$.

\end{proof}

\begin{algorithm}[htbp!]
  \caption{Construction of the Decision Diagram}\label{alg:dd-construction}
  \textbf{Input:} SOP representation $f_C$; Variable order $x_1, x_2 \ldots, x_n$.\\
  \textbf{Output:} The decision diagram of $f_{C}$.
  \begin{algorithmic}[1]
    \If{SOP contains only a constant term $\gamma$}
      \State\Return{}$\text{constant node}(\gamma)$
    \EndIf{}
    \State{}Initialize $\vb{A}$ as the adjacency matrix of variable graph $G_C$
    \For{$i = 2$ \textbf{to} $n$}
      \State{}Compute basis ${\{\vb{d}_j\}}_{[i-1]}$ for the row space of $\vb{A}_{[i-1]}$
      \State{}Record the corresponding pivot positions ${\{p_j\}}_{j=1}^{r_i}$
    \EndFor{}
    \State{}Initialize lookup table$[i]$ for $1\leq i \leq n - 1$
    \State{}Initialize the root node in level $1$.
    \State{}Set $g_1 = 0, \gamma_1 = 0$ for the root node.
    \For{$i = 1$ \textbf{to} $n$} \Comment{Iterate through levels and nodes}
      \For{each node in level $i$}
        \State{}Read the node's corresponding $g_i$ and $\gamma_i$.
        \BlockComment{Process Zero Child ($x_i = 0$)}
        \State{}Compute the zero child node's $g_{i,0}$ and $\gamma_{i,0}$.
        \State{}Determine level index $s$ where the zero child node should be inserted.
        \If{$s$ is $ n + 1$}
          \State{}$\text{node}.\text{set\_zero\_child}(\text{constant\_node }(\gamma_{i,0}))$
        \Else{}
          \State{}Compute tuple ($\gamma_{i,0}, c_{1}, \ldots, c_{r_s}$) using the pivots for $\vb{A}_{[s-1]}$.
          \State{}$m\leftarrow \text{table}[s]$.lookup\_or\_insert($\gamma_{i,0}, c_{s}, \ldots, c_{r_s}$)
          \State{}node.set\_zero\_child($m$)
          \State{}Record $g_{i,0}$ and $\gamma_{i,0}$ for $m$.
        \EndIf{}
        \BlockComment{Process One Child ($x_i = 1$)}
        \State{}Compute the one child node's $g_{i,1}$ and $\gamma_{i,1}$.
        \State{}Determine level index $t$ where the one child node should be inserted.
        \If{$t$ is $ n + 1$}
          \State{}node.set\_one\_child(constant node ($\gamma_{i,1}$))
        \Else{}
          \State{}Compute tuple ($\gamma_{i,1}, c_{1}, \ldots, c_{r_t}$) using the pivots for $\vb{A}_{[t-1]}$.
          \State{}$m\leftarrow \text{table}[t]$.lookup\_or\_insert($\gamma_{i,1}, c_{t}, \ldots, c_{r_t}$)
          \State{}node.set\_one\_child($m$)
          \State{}Record $g_{i,1}$ and $\gamma_{i,1}$ for $m$.
        \EndIf{}
      \EndFor{}
    \EndFor{}
    \State{}\Return{}the root node.
 \end{algorithmic}
\end{algorithm}

\section{Provable advantage over tensor networks}\label{sec:bounds}

This section presents concrete quantum circuit constructions where FeynmanDD
demonstrates efficient simulation while tensor network contraction complexity
remains high.
We focus on computing the output amplitude $\bra{0^n}C\ket{0^n}$ for an
Instantaneous Quantum Polynomial (IQP) circuit $C$ of the form
$C = \Had^{\otimes n} D \Had^{\otimes n}$, where $D$ is a diagonal
circuit~\cite{SB09,BJS11,BMS16}.
We consider two different gate sets: $\mathcal{Z} = \{\Had, \Z, \CZ, \CCZ\}$ and
$\mathcal{T}$.
For the former, the diagonal gate $D$ is composed of $\Z$, $\CZ$, and $\CCZ$
gates; for the latter, $D$ is composed of $\T$ and $\CZ$ gates.
We present two concrete circuit constructions: one based on the linear-network
model characterizing the BDD size~\cite{Knu09}, and the other based on our
previous complexity characterization using linear rank-width.
The analysis of the first construction does not depend on the linear rank-width
characterization at all, while the second is based entirely on this new
characterization.

\subsection{Linear-network circuits}\label{subsec:ln-ckts}

For IQP circuits with gate set $\mathcal{Z}$, the sum-of-powers representation
for $\bra{0^n}C\ket{0^n}$ can be expressed as:
\begin{equation*}
\bra{0^n}C\ket{0^n} = \frac{1}{2^{n}} \sum_{x \in \bin^{n}} {(-1)}^{f(x)}.
\end{equation*}
Here, $f: \bin^{n} \to \bin$ represents a multilinear degree-3 polynomial
naturally defined by $D$, where $\Z, \CZ, \CCZ$ gates correspond to linear,
quadratic, and cubic terms, respectively.
As previously discussed, decision diagrams can efficiently count solutions to
$f(x) \equiv j \pmod{r}$, thus enabling amplitude computation.
The time complexity is $\order{n \size{f}}$.
Consequently, if $\size{f}$ is polynomial in $n$, the amplitude can be computed
efficiently.

\begin{figure}[htb!]
  \centering

  \begin{tikzpicture}[node distance=1.5cm, >=latex]

    \node[gate, minimum width=.8cm, minimum height=2cm] (M1) at (0,0) {$M_1$};
    \node[gate, minimum width=.8cm, minimum height=2cm, right of=M1] (M2) {$M_2$};
    \node[gate, minimum width=.8cm, minimum height=2cm, right of=M2] (M3) {$M_3$};
    \node[gate, minimum width=.8cm, minimum height=2cm, right of=M3] (M4) {$M_4$};
    \node[gate, minimum width=.8cm, minimum height=2cm, right = 4cm of M4] (Mn) {$M_n$};

    \node[above = 0.5cm of M1] (x1) {$x_1$};
    \node[above = 0.5cm of M2] (x2) {$x_2$};
    \node[above = 0.5cm of M3] (x3) {$x_3$};
    \node[above = 0.5cm of M4] (x4) {$x_4$};
    \node[above = 0.5cm of Mn] (xn) {$x_n$};

    \node[right = .5cm of Mn] (output) {Output};

    \draw[->] (x1) -- (M1.north);
    \draw[->] (x2) -- (M2.north);
    \draw[->] (x3) -- (M3.north);
    \draw[->] (x4) -- (M4.north);
    \draw[->] (xn) -- (Mn.north);

    \draw[->] ([yshift=.8cm]M1.east) -- ([yshift=.8cm]M2.west);
    \draw[->] ([yshift=.4cm]M1.east) -- ([yshift=.4cm]M2.west);
    \draw[->] ([yshift=.8cm]M2.east) -- ([yshift=.8cm]M3.west);
    \draw[->] ([yshift=.55cm]M2.east) -- ([yshift=.55cm]M3.west);
    \draw[->] ([yshift=.3cm]M2.east) -- ([yshift=.3cm]M3.west);
    \draw[->] ([yshift=.8cm]M3.east) -- ([yshift=.8cm]M4.west);
    \draw[->] ([yshift=.6cm]M3.east) -- ([yshift=.6cm]M4.west);
    \draw[->] ([yshift=.4cm]M3.east) -- ([yshift=.4cm]M4.west);
    \draw[->] ([yshift=.2cm]M3.east) -- ([yshift=.2cm]M4.west);
    \draw[->] ([yshift=.8cm]M4.east) -- ([yshift=.8cm,xshift=.7cm]M4.east);
    \draw[->] ([yshift=.625cm]M4.east) -- ([yshift=.625cm,xshift=.7cm]M4.east);
    \draw[->] ([yshift=.45cm]M4.east) -- ([yshift=.45cm,xshift=.7cm]M4.east);
    \draw[->] ([yshift=.275cm]M4.east) -- ([yshift=.275cm,xshift=.7cm]M4.east);
    \draw[->] ([yshift=.1cm]M4.east) -- ([yshift=.1cm,xshift=.7cm]M4.east);
    \draw[->] ([yshift=.8cm,xshift=-.7cm]Mn.west) -- ([yshift=.8cm]Mn.west);
    \draw[->] ([yshift=.55cm,xshift=-.7cm]Mn.west) -- ([yshift=.55cm]Mn.west);
    \draw[->] ([yshift=.3cm,xshift=-.7cm]Mn.west) -- ([yshift=.3cm]Mn.west);

    \draw[->] ([yshift=-.8cm]M2.west) -- ([yshift=-.8cm]M1.east);
    \draw[->] ([yshift=-.5cm]M2.west) -- ([yshift=-.5cm]M1.east);
    \draw[->] ([yshift=-.8cm]M3.west) -- ([yshift=-.8cm]M2.east);
    \draw[->] ([yshift=-.5cm]M3.west) -- ([yshift=-.5cm]M2.east);
    \draw[->] ([yshift=-.8cm]M4.west) -- ([yshift=-.8cm]M3.east);
    \draw[->] ([yshift=-.5cm]M4.west) -- ([yshift=-.5cm]M3.east);
    \draw[->] ([yshift=-.8cm,xshift=.7cm]M4.east) -- ([yshift=-.8cm]M4.east);
    \draw[->] ([yshift=-.5cm,xshift=.7cm]M4.east) -- ([yshift=-.5cm]M4.east);
    \draw[->] ([yshift=-.8cm]Mn.west) -- ([yshift=-.8cm,xshift=-.7cm]Mn.west);
    \draw[->] ([yshift=-.5cm]Mn.west) -- ([yshift=-.5cm,xshift=-.7cm]Mn.west);

    \draw [decorate,thick,decoration={brace,amplitude=3pt,mirror,raise=.8cm}]
    ([yshift=.1cm]M4.east) -- ([yshift=.8cm]M4.east) node[midway,xshift=1.2cm]{$a_{4}$};

    \draw [decorate,thick,decoration={brace,amplitude=3pt,mirror,raise=.8cm}]
    ([yshift=-.9cm]M4.east) -- ([yshift=-.4cm]M4.east) node[midway,xshift=1.2cm]{$b_{4}$};

    \draw [decorate,thick,decoration={brace,amplitude=3pt,raise=.8cm}]
    ([yshift=.3cm]Mn.west) -- ([yshift=.8cm]Mn.west) node[midway,xshift=-1.4cm]{$a_{n-1}$};

    \draw [decorate,thick,decoration={brace,amplitude=3pt,raise=.8cm}]
    ([yshift=-.9cm]Mn.west) -- ([yshift=-.4cm]Mn.west) node[midway,xshift=-1.4cm]{$b_{n-1}$};

    \node[draw=none] (dots1) at ($(M4)!0.45!(Mn)$) {\dots};
    \node[draw=none] (dots1) at ([yshift=1.6cm]$(M4)!0.45!(Mn)$) {\dots};
    \node[draw=none] (dots1) at ([yshift=.8cm]$(M4)!0.45!(Mn)$) {\dots};
    \node[draw=none] (dots1) at ([yshift=-.8cm]$(M4)!0.45!(Mn)$) {\dots};

    \draw[->] (Mn.east) -- (output.west);

  \end{tikzpicture}
  \caption{A linear network for computing Boolean functions.
    Figure adapted from~\cite{Knu09}.}\label{fig:linear_network}
\end{figure}
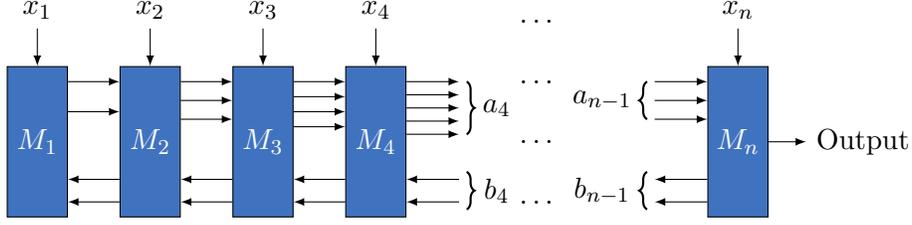

To establish a provable bound for the size of $\size{f}$, we make use of a
linear-network construction from~\cite{Knu09}.
Consider a Boolean function $f$ computed by the linear network depicted in
\cref{fig:linear_network}.
Each module $M_j$ receives the variable $x_j$ as input.
The network includes inter-module wires carrying Boolean signals, with $a_j$
forward-propagating wires from $M_j$ to $M_{j+1}$ and $b_j$ backward-propagating
wires from $M_{j+1}$ to $M_j$, for $1 \leq j \leq n - 1$.
By defining $a_0 = b_0 = b_n = 0$ and $a_n = 1$, we can specify that module
$M_j$ has $a_{j-1} + b_j + 1$ input bits and $a_j + b_{j-1}$ output bits.
\cref{thm:linear_network_BDD} provides a critical theorem bounding the size of
$\size{f}$:
\begin{theorem}[Theorem M of~\cite{Knu09}]\label{thm:linear_network_BDD}
  If a Boolean function $f$ can be computed by the linear network in
  \cref{fig:linear_network}, then $\size{f} \leq \sum_{j=0}^n 2^{a_j 2^{b_j}}$.
\end{theorem}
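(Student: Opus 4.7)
The plan is to apply the standard level-by-level counting technique for reduced ordered BDDs under the variable ordering $x_1, x_2, \ldots, x_n$, and to bound the number of distinct subfunctions $f[a_1/x_1, \ldots, a_j/x_j]$ for each $j$ by analyzing the communication across the cut that separates modules $M_1, \ldots, M_j$ from $M_{j+1}, \ldots, M_n$. The number of nodes in level $j+1$ is bounded by the number of such distinct subfunctions, and summing these bounds over $j = 0, 1, \ldots, n$ (using the boundary conventions $a_0 = b_0 = b_n = 0$ and $a_n = 1$ to capture the root and the two terminal nodes) will give the claimed total $\sum_{j=0}^{n} 2^{a_j 2^{b_j}}$.

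The key step is to show that, for any fixed assignment $\vec{a} = (a_1, \ldots, a_j)$, the subfunction $f[\vec{a}/x_1, \ldots, x_j]$ depends on $\vec{a}$ only through a single Boolean function at the cut. First I would form the \emph{left sub-network} consisting of $M_1, \ldots, M_j$ with inputs fixed to $\vec{a}$. Viewing the $b_j$ backward wires from $M_{j+1}$ to $M_j$ as external inputs and the $a_j$ forward wires from $M_j$ to $M_{j+1}$ as external outputs, this left sub-network computes a function
\begin{equation*}
  g_{\vec{a}} : \{0,1\}^{b_j} \to \{0,1\}^{a_j}.
\end{equation*}
I would then argue that whenever $g_{\vec{a}} = g_{\vec{a}'}$ for two assignments $\vec{a}, \vec{a}'$, the resulting subfunctions $f[\vec{a}/\vec{x}]$ and $f[\vec{a}'/\vec{x}]$ coincide as functions of $x_{j+1}, \ldots, x_n$, because the right sub-network interacts with the left only through the wires at the cut, so replacing the left sub-network by any other circuit realizing $g_{\vec{a}}$ yields the same overall output.

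From this indistinguishability claim, the count of distinct subfunctions at level $j+1$ is bounded by the number of possible functions $\{0,1\}^{b_j} \to \{0,1\}^{a_j}$, which is $(2^{a_j})^{2^{b_j}} = 2^{a_j 2^{b_j}}$. Summing this over $j$ from $0$ to $n$ gives the stated upper bound on $\size{f}$, and the two endpoint terms $2^{0 \cdot 2^0} = 1$ and $2^{1 \cdot 2^0} = 2$ correctly account for the root and the two terminals.

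The main subtlety I expect is the well-definedness of $g_{\vec{a}}$ and of the reduction argument, since forward and backward wires introduce a bidirectional communication pattern across the cut. I would handle this by invoking the standing assumption that the linear network computes a well-defined Boolean function: for any variable assignment the joint wire values exist and are unique, so in particular, given the external function $g_{\vec{a}}$ on the cut and any assignment to $x_{j+1}, \ldots, x_n$, the right sub-network's output is uniquely determined and agrees with $f$. Hence the behavior of the left part is fully captured by $g_{\vec{a}}$, which is the only place real care is needed; the remaining counting is then routine.
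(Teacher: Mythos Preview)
The paper does not prove this theorem; it is quoted from Knuth~\cite{Knu09} and used as a black box. Your proposal is correct and is essentially Knuth's own argument: bound the number of BDD nodes between levels $j$ and $j+1$ by the number of distinct interface functions $g_{\vec a}:\{0,1\}^{b_j}\to\{0,1\}^{a_j}$ that the left sub-network can realize once $x_1,\ldots,x_j$ are fixed, giving at most $2^{a_j 2^{b_j}}$ possibilities, and then sum over $j$. The boundary conventions are handled exactly as you describe.

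The one point that deserves a slightly sharper statement is the well-definedness issue you flag. The standing assumption is not merely that the network computes a well-defined function on full inputs, but that it is a combinational (acyclic) circuit at the gate level; the forward/backward labels describe only the coarse module-to-module topology. Acyclicity of the whole DAG immediately gives acyclicity of the left sub-DAG, so $g_{\vec a}$ is well-defined for \emph{every} value of the $b_j$ backward wires, not only those arising from some global assignment. The replacement argument then goes through because the coupled system $\vec F = g_{\vec a}(\vec B)$, $\vec B = h_{\vec x'}(\vec F)$ is the same system for $\vec a$ and $\vec a'$ whenever $g_{\vec a}=g_{\vec a'}$, and acyclicity guarantees its unique solvability. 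With that clarification your proof is complete.
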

The theorem reveals that the number of backward signals $b_j$ substantially
influences the Binary Decision Diagram (BDD) size.
Subsequently, we will construct a family of Boolean functions $f$ (and
corresponding IQP circuits) that exhibit a
provably small BDD size according to \cref{thm:linear_network_BDD}, while
maintaining a large tensor network contraction complexity.

We define a degree-$3$ polynomial $f: \bin^{n} \to \bin$ as follows:
\begin{equation}\label{eq:small_bdd_f}
  f(x) = A(x) \sum_{i=1}^n x_i + \sum_{i=1}^{n-k+1} C_{i:i+k-1},
\end{equation}
where $A(x) := \sum_{i=1}^n \alpha_i x_i$, with $\alpha_i$ randomly selected
from $\bin$, $k = \order{\log{n}}$, and $C_{i:j}$ consists exclusively of degree-$3$
terms involving variables $x_i, \ldots, x_j$. The design purposefully ensures
that the second term $C_{i:i+k-1}$ involves $k$ consecutive variables,
guaranteeing that the number of forward signals for each module in the linear
network remains bounded by $k+1$.
An important reason to include the $C_{i:i+k-1}$ terms in this construction is
that the corresponding IQP circuit would otherwise be a Clifford circuit, whose output
amplitude can be efficiently computed using the Gottesman-Knill
algorithm~\cite{Got98,AG04}.

Below, we use the linear network model to bound the BDD size of $f$.
For this reason, we refer to the resulting IQP circuits as \emph{linear-network
  circuits}.
This family of circuits has been numerically studied in~\cite{WCYJ25} and the
authors shows that FeynmanDD outperforms existing DD-based simulators such as
DDSIM~~\cite{ZW19,ZHW19} and WCFLOBDD~\cite{SCR23,SCR24}.
Meanwhile, \cref{tab:ln_results} numerically confirms that FeynmanDD performs
better on linear-network circuits than Quimb~\cite{Gra18} and
TensorCircuit~\cite{ZAW+23}, two simulation methods based on tensor networks.
Note that in the numerical experiment, we used the binary synthesis method which
supports three-qubit gates.

\begin{theorem}\label{thm:linear_network_dd_complexity}
  The size of the Binary Decision Diagram (BDD) for $f(x)$ as defined in
  \cref{eq:small_bdd_f} is bounded by $\size{f} = \order{n 2^k}$.
  Specifically, when $k = \order{\log{n}}$, we have
  $\size{f} = \order{\poly(n)}$.
\end{theorem}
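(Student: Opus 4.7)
The plan is to apply \cref{thm:linear_network_BDD} (Theorem M of Knuth) by exhibiting a linear network that computes $f$ with $\order{k}$ forward wires and, crucially, \emph{no} backward wires. Since the bound there reads $\sum_j 2^{a_j 2^{b_j}}$, a single backward wire would introduce a second level of exponentiation in $b_j$, so the entire strategy is to arrange matters so that $b_j = 0$ while each $a_j$ stays at $\order{k}$.

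Under the natural variable ordering $x_1, x_2, \ldots, x_n$, matching the left-to-right flow of the network, I would have each module $M_j$ forward-propagate the following bits to $M_{j+1}$: (i) the running parity $R_j := \bigoplus_{i \le j,\, \alpha_i = 1} x_i$ of $A$; (ii) the running parity $S_j := \bigoplus_{i \le j} x_i$ of $\sum_i x_i$; (iii) the buffer $(x_{j-k+2}, \ldots, x_j)$ of the last $k-1$ inputs, padded with zeros for small $j$; and (iv) the running partial sum $T_j := \bigoplus_{i :\, i+k-1 \le j} C_{i:i+k-1}(x_i, \ldots, x_{i+k-1})$ of the cubic blocks already completed. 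Module $M_j$ updates (i), (ii), and (iv) from $x_j$ and the received buffer, while (iii) is just a shift. No backward communication is needed because each block $C_{i:i+k-1}$ can be evaluated the moment all $k$ of its variables have been read, namely at $M_{i+k-1}$. Hence $a_j \le (k-1) + 3 = k+2$ and $b_j = 0$ for all $j$.

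Module $M_n$ finally outputs the single bit $R_n \cdot S_n \oplus T_n$. Working modulo $2$, $A(x) \sum_i x_i \equiv (A(x) \bmod 2)\bigl(\sum_i x_i \bmod 2\bigr) \equiv R_n \cdot S_n$, so this output equals $f(x)$ and the required boundary condition $a_n = 1$ holds. Plugging $b_j = 0$ and $a_j \le k+2$ into \cref{thm:linear_network_BDD} yields
\[
\size{f} \;\le\; \sum_{j=0}^{n} 2^{a_j \cdot 2^{b_j}} \;=\; \sum_{j=0}^{n} 2^{a_j} \;\le\; (n+1)\, 2^{k+2} \;=\; \order{n \cdot 2^k},
\]
which is $\order{\poly(n)}$ when $k = \order{\log n}$.

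The conceptual step that carries the proof is the observation that the global product $A(x) \sum_i x_i$, which naively couples every pair of variables and would seem to demand $\Omega(n)$ forward state, collapses modulo $2$ to the product of two single-bit running parities. Once this is noticed, the remaining ingredients are routine local updates, and the absence of backward signals is forced by the locality of the cubic blocks. The only items requiring care are initializing the buffer in (iii) correctly for $j < k$ and verifying that $M_n$ correctly combines the four pieces of state; both are direct inspections rather than substantive difficulties.
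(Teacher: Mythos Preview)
Your proof is correct and takes a genuinely different route from the paper. The paper computes $A(x)$ at $M_1$ by sending one backward signal $A_j = \bigoplus_{i \ge j} \alpha_i x_i$ from each $M_j$ to $M_{j-1}$, and then forward-propagates the pair $\bigl(A(x),\, A(x)\sum_{i \le j} x_i\bigr)$ along with the same $(k-1)$-variable buffer and the accumulated cubic sum; this yields $a_j \le k+2$ but $b_j = 1$. You instead exploit the observation that over $\FF_2$ the global product $A(x)\sum_i x_i$ collapses to the product of two single-bit running parities $R_j, S_j$, so nothing needs to flow backward and $b_j = 0$ throughout. Both constructions use $k+2$ forward wires, but yours is strictly sharper: plugging $b_j = 1$ into \cref{thm:linear_network_BDD} gives $\sum_j 2^{2(k+2)} = \order{n\,4^k}$, whereas your $b_j = 0$ gives the stated $\order{n\,2^k}$ on the nose. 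The paper's version has the mild advantage that the product $A(x)\sum_{i\le j} x_i$ is already formed at every module, which would generalize more directly if the target ring were not $\FF_2$; your parity trick is specific to the Boolean case, but that is exactly the setting of the theorem.
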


\begin{proof}
  Let $A_j(x) = \sum_{i=j}^{n} \alpha_i x_i$ represent a partial sum of $A(x)$
  from index $j$ to $n$, with $A_1(x) = A(x)$.
  We demonstrate that $f$ can be computed using the linear network in
  \cref{tab:construction}.
  The computation proceeds through a systematic signal transmission process: For
  backward signals, module $M_j$ sends $A_j$ to module $M_{j-1}$ when $j > 1$.
  This enables module $M_1$ to compute $A(x)$ and $x_1 A(x)$.
  Subsequently, each module $M_j$ receives $A(x)$ and
  $A(x) \sum_{i=1}^{j-1} x_i$ from $M_{j-1}$.
  It then computes and sends $A(x) \sum_{i=1}^{j} x_i$ to $M_{j+1}$.
  Through this signal propagation strategy, the first term of $f(x)$ in
  \cref{eq:small_bdd_f} can be computed across the linear network.

  \begin{table}[t]
    \centering
    \begin{tabular}{lll}
      \toprule
      Module & Forward Signals & Backward Signal\\
      \midrule
      $M_1$ & $x_1, x_1 A(x), A(x) \vphantom{\sum\limits_{i=1}^{k}}$ & \\
      $M_2$ & $x_1, x_2, A(x) \sum\limits_{i=1,2} x_i, A(x)$ & $A_2$ \\
      $\;\;\vdots$ & $\quad\vdots$ & $\;\;\vdots$ \\
      $M_{k-1}\qquad$ & $x_1, \ldots, x_{k-1}, A(x) \sum\limits_{i=1}^{k-1} x_i, A(x)$ & $ A_{k-1}$ \\
      $M_k$ & $x_2, \ldots, x_k, C_{1:k}, A(x) \sum\limits_{i=1}^k x_i, A(x)$ & $A_k$ \\
      $\;\;\vdots$ & $\quad\vdots$ & $\;\;\vdots$ \\
      $M_{n-1}$ & $x_{n-k+1}, \ldots, x_{n-1}, \sum\limits_{i=1}^{n-k} C_{i : i+k-1},
                  A(x) \sum\limits_{i=1}^{n-1} x_i, A(x)\qquad$ & $A_{n-1}$ \\
      $M_n$ & $f(x)$ & $A_n$\\
      \bottomrule
    \end{tabular}
    \caption{Linear-network construction for a function $f$ with small BDD
      size.}\label{tab:construction}
  \end{table}
  For the second term, the computation follows a progressive signal transmission
  strategy: When $j < k$, module $M_j$ sends variables $x_1, \ldots, x_j$ to
  module $M_{j+1}$.
  At $j = k$, module $M_k$ receives variables $x_1, \ldots, x_k$ and computes
  $C_{1:k}$.
  Since $x_1$ becomes unnecessary for subsequent computations, $M_k$ sends
  $x_2, \ldots, x_k$ and $C_{1:k}$ to $M_{k+1}$.
  When $j = k+1$, module $M_{k+1}$ computes $C_{2:k+1}$.
  As $x_2$ is no longer required, $M_{k+1}$ forwards $x_3, \ldots, x_{k+1}$ and
  the sum $C_{1:k} + C_{2:k+1}$ to $M_{k+2}$.
  This process continues, progressively eliminating unnecessary
  variables and accumulating computational results, ultimately enabling the
  computation of $f(x)$.

  Since each module $M_j$ has at most $k+2$ forward signals and $1$ backward
  signal, the BDD size of $f$ is bounded by $\size{f} = \order{n 2^k}$ according
  to \cref{thm:linear_network_BDD}.
\end{proof}

We will now demonstrate that the tensor network method has an exponentially
lower bounded contraction complexity.
This result is anticipated, as the following observations suggest: When each
$\alpha_i = 1$, the first term of $f(x)$ generates a complete graph in the
corresponding tensor network.
Even with $\alpha_i$ randomly selected, the subgraph's connectivity remains
sufficiently extensive to render tensor network contraction challenging.
In the subsequent analysis, we will formally establish a lower bound for this
complexity, formally stated in the following theorem.

\begin{theorem}\label{thm:linear_network_tensor_complexity}
  Tensor network simulation has exponential time and space complexity for the
  IQP circuit corresponding to the function in \cref{eq:small_bdd_f}.
\end{theorem}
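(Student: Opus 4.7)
The plan is to show that the factor graph $G_{C}$ of the IQP circuit (equivalently, by the correspondence recalled in \cref{sec:bdd-definition}, the line graph of its tensor network) has treewidth $\Omega(n)$ with high probability over the random $\alpha$, and then invoke the standard relationship between treewidth and tensor contraction complexity to conclude $2^{\Omega(n)}$ lower bounds on both time and space.

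First, I would identify the edge set of $G_{C}$ explicitly. After absorbing the outer Hadamard layers, which contribute only a global summation, the factor graph has one vertex per qubit and one clique per monomial of $f$. Expanding
\begin{equation*}
  A(x)\sum_{i=1}^{n} x_i = \sum_{i,j}\alpha_i x_i x_j,
\end{equation*}
the coefficient of the monomial $x_i x_j$ for $i<j$ equals $\alpha_i+\alpha_j \pmod{2}$, which is nonzero iff exactly one of $i,j$ lies in $S := \{i : \alpha_i = 1\}$. Writing $T := [n]\setminus S$, the quadratic part of $f$ therefore already induces the complete bipartite graph $K_{S,T}$ as a subgraph of $G_{C}$, independently of the local degree-$3$ contributions from $C_{i:i+k-1}$.

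Second, I would apply a standard concentration argument: since each $\alpha_i$ is independently uniform on $\bin$, a Chernoff bound gives $|S|,|T| \geq n/4$ with probability $1 - 2e^{-\Omega(n)}$. Because $\tw(K_{a,b}) = \min(a,b)$ and treewidth is monotone under taking subgraphs, this yields $\tw(G_{C}) \geq n/4$ with high probability. Moreover, every balanced vertex separator of $K_{S,T}$ must delete $\Omega(n)$ vertices, so the same lower bound holds for any balanced separator of $G_{C}$.

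Third, I would deduce the exponential complexity. A standard consequence of the analysis of~\cite{MS08}, equivalently stated via vertex-separator lower bounds, is that any tensor contraction schedule for the network must at some intermediate stage produce a partial tensor whose open legs correspond to a balanced separator of $G_{C}$. Since this separator has size $\Omega(n)$, the intermediate tensor stores $2^{\Omega(n)}$ complex entries, giving the claimed exponential lower bounds on both space and running time. The main obstacle will be making the lower-bound direction of the treewidth--contraction correspondence fully rigorous: while the upper bound $2^{\order{\tw}}$ from~\cite{MS08} is canonical, lower bounds in the literature are most often phrased in terms of branchwidth, carving-width, or the size of a minimum balanced vertex separator. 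I would therefore route the final step through the separator formulation, which applies directly to the explicit $K_{S,T}$ subgraph we exhibit and is insensitive to the precise choice of tensor contraction framework.
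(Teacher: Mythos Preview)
Your approach is correct but takes a different route from the paper's. The paper works directly in the tensor network graph $G$ (gate tensors as vertices, wire segments as edges): from the $\CZ$ gates arising from the quadratic part of $f$ it exhibits a $w \times (n-w)$ grid as a subgraph of $G$, where $w = |\alpha|$ is the Hamming weight, and then applies \cref{prop:subgraph_cc} together with $\cc(G) = \tw(G^*) \geq (\tw(G)-1)/2$ from~\cite{MS08}, conditioning on $w = n/c$. You instead work in the $n$-vertex factor graph and extract the complete bipartite graph $K_{S,T}$, using Chernoff to get $\min(|S|,|T|) \geq n/4$ with high probability. Your route is arguably more elementary: the $K_{S,T}$ structure drops out immediately from the expansion of $A(x)\sum_i x_i$, and $\tw(K_{a,b}) = \min(a,b)$ needs no work, whereas the paper's grid embedding requires some care about how consecutive $\CZ$ tensors chain along each wire.

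One imprecision worth fixing: your $G_C$ with ``one vertex per qubit'' is \emph{not} the line graph $G^*$ of the tensor network, which has one vertex per wire segment (and there are many segments per qubit once the gates in $D$ are laid out), so the parenthetical identification is incorrect and the reference to \cref{sec:bdd-definition} is misplaced. What is true, and suffices, is that $G_C$ is a \emph{minor} of $G^*$: contract each qubit's path of wire-segment vertices into a single vertex. Minor-monotonicity of treewidth then gives $\cc(G) = \tw(G^*) \geq \tw(G_C) \geq \tw(K_{S,T})$ directly. This is cleaner than the balanced-separator route you sketch and dissolves the obstacle you flag about making the lower-bound direction of the treewidth--contraction correspondence rigorous.
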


As a preliminary step, we will first provide a precise definition of contraction
complexity.

\begin{definition}[Contraction Complexity]
  Given a graph $G$ and a contraction ordering $\pi$ of its edges, the
  contraction complexity of $\pi$ is defined as the maximum degree of the merged
  vertices formed in the contraction process, denoted as $cc(\pi)$.
  The contraction complexity of $G$ $\cc(G)$ is defined as the minimum of
  $cc(\pi)$ over all possible orderings: $\cc(G) := \min_{\pi} \cc(\pi)$.
\end{definition}

Then, we give the following proposition.

\begin{proposition}\label{prop:subgraph_cc}
  Given a graph $G = (V, E)$, let $G' = (V', E')$ be a subgraph of $G$, that is,
  $G'$ is obtained from $G$ by deleting edges or vertices.
  Then $\cc(G') \leq \cc(G)$.
\end{proposition}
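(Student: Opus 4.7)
The plan is to construct, from an optimal contraction ordering $\pi$ for $G$ with $cc(\pi) = \cc(G)$, a contraction ordering $\pi'$ for $G'$ with $cc(\pi') \leq cc(\pi)$. Specifically, I would take $\pi'$ to be the subsequence of $\pi = (e_1, \ldots, e_m)$ consisting of the edges in $E(G')$, kept in their original relative order. Since $E(G') \subseteq E(G)$, this $\pi'$ is a valid ordering of $E(G')$, and by the minimality in the definition of $\cc$, establishing $cc(\pi') \leq cc(\pi)$ yields $\cc(G') \leq cc(\pi') \leq cc(\pi) = \cc(G)$.

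To bound $cc(\pi')$, I would run $\pi'$ on $G'$ in lockstep with $\pi$ on $G$: at each step $j$ of $\pi'$, corresponding to position $i_j$ in $\pi$, the super-vertex $X_j$ just formed in $G'$ is a subset of some super-vertex $Y_j$ in $\pi$'s state on $G$, because $\pi'$ has contracted a subset of the edges that $\pi$ has contracted by step $i_j$. For each edge of $G'$ incident to $X_j$ that goes outside $X_j$, the other endpoint either lies outside $Y_j$, in which case the edge also contributes to the degree of $Y_j$ in $G$ and is thereby bounded by $\cc(G)$, or it lies in $Y_j \setminus X_j$, in which case the edge is internal to $Y_j$ in $G$ but external to $X_j$ in $G'$. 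The first case yields the desired bound directly, while the second does not match $Y_j$'s degree alone.

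The main obstacle will be handling edges of the second kind, which arise precisely when $\pi$ has used extra edges from $E(G) \setminus E(G')$ to merge additional vertices into $Y_j$. I would resolve this either by tracing each such edge forward to a later step of $\pi$ where it participates in the boundary of another super-vertex, or, more cleanly, by switching to induction on $|E(G)| - |E(G')| + |V(G)| - |V(G')|$, for which it suffices to handle the atomic cases of single-edge deletion and single-vertex deletion. For single edge deletion, removing the skipped edge from $\pi$ gives an ordering of $G - e$ whose super-vertex structure differs from that on $G$ by at most one split, and since $G - e$ has strictly fewer edges, every super-vertex's degree in $G - e$ is at most its degree in $G$. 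Single vertex deletion then reduces to repeated edge deletions followed by removing an isolated vertex, which does not change $\cc$.
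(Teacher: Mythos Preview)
Your construction of $\pi'$ as the subsequence of an optimal $\pi$ restricted to $E(G')$ is exactly the paper's approach; the paper then asserts in one sentence that ``the degrees of the merged vertices of $G'$ are all smaller than or equal to that of $G$'' and concludes $cc(\pi')\le cc(\pi)$ without further argument. You go beyond the paper by correctly isolating the obstruction: edges of $G'$ running from $X_j$ into $Y_j\setminus X_j$ are external to $X_j$ in the $\pi'$-process but internal to $Y_j$ in the $\pi$-process, so the naive comparison $\deg_{G'}(X_j)\le\deg_G(Y_j)$ is not immediate. That observation is genuine and is something the paper simply glosses over.

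However, your fallback to induction on single-edge deletion does not actually escape this obstruction. You write that for $G'=G-e$, ``every super-vertex's degree in $G-e$ is at most its degree in $G$'', which yields $\deg_{G-e}(X)\le\deg_G(X)$ for each super-vertex $X$ appearing in $\pi'$'s run. But when $e=(u,v)$ has been skipped and $X$ is one of the two split pieces (the component $W_u$ containing $u$, or $W_u\cup B'$ after a further merge), the set $X$ need not occur as a super-vertex anywhere in $\pi$'s run on $G$: in $\pi$ the corresponding super-vertex is always $X\cup W_v$. So you have no bound tying $\deg_G(X)$ to $cc(\pi)$, and this is precisely the ``second kind of edge'' difficulty you already flagged, now reappearing in the base case. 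Reducing to a single deleted edge makes the split unique but does not make it disappear.

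The inequality $cc(\pi')\le cc(\pi)$ is true, but closing it cleanly needs a different handle --- for instance, translating $\pi$ into a vertex-elimination ordering of the line graph $G^*$ and using that $L(G')$ is an induced subgraph of $L(G)$, where restriction of an elimination ordering to an induced subgraph never increases its width. Your first suggested alternative, tracing the problematic edges forward to a later step of $\pi$, can also be made to work, but it needs to be carried out rather than just named.
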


\begin{proof}
  Let $\pi$ be a contraction ordering of $G$ and let $\pi'$ be an induced
  ordering of $\pi$ by removing $\pi(i) \notin E'$ from $\pi$.
  The degrees of the merged vertices of $G'$ are all smaller than or equal to
  that of $G$, which implies $\cc(\pi') \leq \cc(\pi)$ and thus
  $\cc(G') \leq \cc(G)$.
\end{proof}

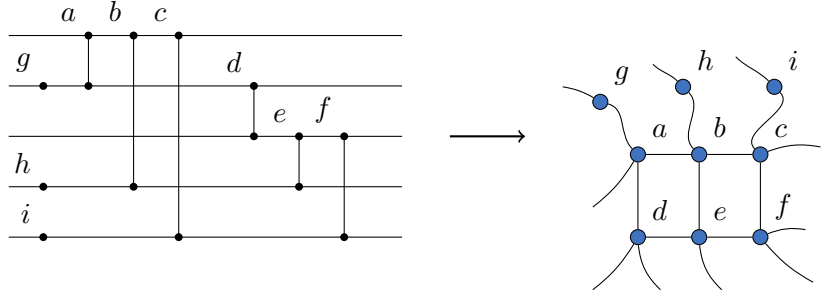
\begin{figure}[htb!]
  \centering
  \begin{tikzpicture}[node distance=.4cm,
    circ/.style={circle, draw, fill=ChannelColor, minimum width=.2cm, inner sep=0}]
    \node[] (In1) at (0,0) {};
    \node[above=of In1] (In2) {};
    \node[above=of In2] (In3) {};
    \node[above=of In3] (In4) {};
    \node[above=of In4] (In5) {};

    \node[] (Out1) at (5.5,0) {};
    \node[above=of Out1] (Out2) {};
    \node[above=of Out2] (Out3) {};
    \node[above=of Out3] (Out4) {};
    \node[above=of Out4] (Out5) {};

    \draw (In1) -- (Out1);
    \draw (In2) -- (Out2);
    \draw (In3) -- (Out3);
    \draw (In4) -- (Out4);
    \draw (In5) -- (Out5);

    \node[control, right=of In1, label={120:$i$}] (i) {};
    \node[control, right=of In2, label={120:$h$}] (h) {};
    \node[control, right=of In4, label={120:$g$}] (g) {};
    \node[control, right=1cm of In4] (ap) {};
    \node[control, right=1cm of In5, label={120:$a$}] (a) {};
    \node[control, right=1.6cm of In5, label={120:$b$}] (b) {};
    \node[control, right=1.6cm of In2] (bp) {};
    \node[control, right=2.2cm of In5, label={120:$c$}] (c) {};
    \node[control, right=2.2cm of In1] (cp) {};

    \node[control, right=3.2cm of In4, label={120:$d$}] (d) {};
    \node[control, right=3.2cm of In3] (dp) {};
    \node[control, right=3.8cm of In3, label={120:$e$}] (e) {};
    \node[control, right=3.8cm of In2] (ep) {};
    \node[control, right=4.4cm of In3, label={120:$f$}] (f) {};
    \node[control, right=4.4cm of In1] (fp) {};

    \draw (a.center) -- (ap.center);
    \draw (b.center) -- (bp.center);
    \draw (c.center) -- (cp.center);
    \draw (d.center) -- (dp.center);
    \draw (e.center) -- (ep.center);
    \draw (f.center) -- (fp.center);

    \draw[->, thick] ([xshift=.5cm]Out3.center) -- ([xshift=1.5cm]Out3.center);

    \begin{scope}[xshift=7cm, node distance=.6cm]

      \node[circ, label={60:$a$}] (A) at (1.5,1.1) {};
      \node[circ, right=of A, label={60:$b$}] (B) {};
      \node[circ, right=of B, label={60:$c$}] (C) {};

      \node[circ, label={60:$d$}] (D) at (1.5,0) {};
      \node[circ, right=of D, label={60:$e$}] (E) {};
      \node[circ, right=of E, label={60:$f$}] (F) {};

      \node[circ, label={60:$g$}] (G) at (1.0,1.8) {};

      \node[circ, label={60:$h$}] (H) at (2.1,2.0) {};
      \node[circ, label={60:$i$}, right=1cm of H] (I) {};

      \draw (A) -- (B) -- (C) -- (F) -- (E) -- (D) -- (A);
      \draw (B) -- (E);

      \draw (B) edge [out=135,in=-45] (H);
      \draw (C) edge [out=135,in=-45] (I);

      \draw (H) edge [out=135, in=-45] (1.7,2.3);
      \draw (I) edge [out=135, in=-45] (2.8,2.4);

      \draw (A) edge [out=135, in=-15] (G);
      
      \draw [bend left=10] (A) to ([shift=({-.6,-.7})]A);
      \draw [bend left=10] (D) to ([shift=({-.6,-.7})]D);
      \draw [bend right=20] (D) to ([shift=({.3,-.7})]D);
      \draw [bend right=20] (E) to ([shift=({.3,-.7})]E);
      \draw [bend right=10] (F) to ([shift=({.7,-.6})]F);
      \draw [bend left=15] (F) to ([shift=({.6,.1})]F);
      \draw [bend left=15] (C) to ([shift=({.8,.1})]C);
      \draw [bend right=10] (G) to ([shift=({-.5,.2})]G);
    \end{scope}

  \end{tikzpicture}
  \caption{Left: A subcircuit of circuit specified by \cref{eq:small_bdd_f} with
    $\vb*{\alpha} = (0, 1, 0, 1, 1)$.
    Right: The corresponding subgraph forming a lattice.}\label{fig:lattice}
\end{figure}

In Proposition 4.2 of~\cite{MS08}, a key relationship $\cc(G) = \tw(G^*)$
is established, where $G^*$ is the line graph of $G$.
A complementary result from the same reference provides an important lower
bound:
\begin{equation*}
  \tw(G^*) \geq \frac{\tw(G) - 1}{2}.
\end{equation*}
Our objective is to lower bound the treewidth of the graph $G$ corresponding to
our constructed circuits.
In conjunction with \cref{prop:subgraph_cc}, this task reduces to bounding
the treewidth of the subgraph derived from a subcircuit.

\begin{proof}[Proof of \cref{thm:linear_network_tensor_complexity}]
  Consider a subcircuit transformed from the following terms of $f(x)$ in
  \cref{eq:small_bdd_f},
  \begin{equation*}
    \sum_{i: \alpha_i = 1} x_i + \sum_{\substack{i: \alpha_i = 1 \\ j: \alpha_j = 0}} x_i x_j \ ,
  \end{equation*}
  which consists of $\Z$ gates and $\CZ$ gates.
  \cref{fig:lattice} shows an example for $\vb*{\alpha} = (0, 1, 0, 1, 1)$.
  A tensor with two legs represents a $\Z$ gate, and a tensor with four legs
  represents a $\CZ$ gate.
  It can be seen that the tensor-network representation embeds a lattice as a
  subgraph.
  The $\CZ$ gates can be partitioned into $n - w$ groups, where
  $w = |\vb*{\alpha}|$ is the Hamming weight of $\vb*{\alpha}$.
  For example, there are two groups in \cref{fig:lattice}, $\{ a, b, c \}$ and
  $\{ d, e, f \}$.
  Each row in the lattice represents one group of $\CZ$ gates, with $w$
  vertices.
  Thus, the lattice is $w \times (n-w)$.
  For such a lattice, the treewidth is $\min(w, n-w)$.
  Therefore, if $w = n/c$ for some constant $c > 1$, then the time complexity of
  using tensor-network method is $2^{\Omega(n)}$.
\end{proof}

\subsection{Linear rank-width construction}\label{subsec:lrw-construction}

We now turn to the gate set $\mathcal{T}$ and use the results established in
\cref{subsec:discrete-gate-set} to construct IQP circuits whose variable graphs
have small linear rank-width but large treewidth.

Let $\vb{A} := a \vb{v}_0 \vb{v}_0^T + \sum_{i=1}^k \vb{v}_i \vb{v}_i^T$ be a
symmetric matrix over $\FF_2$, where $\vb{v}_0$ is the all-ones vector, $a$ is a
random bit, $\vb{v}_i \in \FF_2^n$ are randomly chosen vectors for $i \in [k]$,
and $k = \order{\log{n}}$.
Let $\vb{b} \in \ZZ_8^n$ be a random vector.
Define a polynomial $f: \bin^n \to \ZZ_8$ as
$f(\vb{x}) = 4 \vb{x}^T \vb{A}_{\lup} \vb{x} + \vb{b}^T \vb{x}$.
A corresponding IQP circuit can be constructed as
$C = \Had^{\otimes n}\, D\, \Had^{\otimes n}$ with
$D = \prod_{i<j} \CZ_{ij}^{A_{ij}} \prod_i T_i^{b_i}$.
The zero-to-zero amplitude of $C$ is given by
\begin{equation*}
  \mel{0^n}{C}{0^n} = \frac{1}{2^n} \sum_{\vb{x}} \omega_8^{f(\vb{x})}.
\end{equation*}

Next, we show that this family of IQP circuits can be efficiently simulated by
FeynmanDD by bounding the linear rank-width of the variable graph $G_C$.
Observe that each row of $\vb{A}$ is a linear combination of
$\vb{v}_0, \vb{v}_1, \ldots, \vb{v}_k$, meaning that $\rank(\vb{A}) \leq k+1$.
Since the cut-rank function $\rho_G(X) = \rank(\vb{A}[X, V \setminus X])$
returns the rank of a submatrix of $\vb{A}$, we have $\rho_G(X) \leq k+1$ for
all $X \subseteq V$.
Thus, the linear rank-width of $G_C$ is bounded by $k+1$ and the FeynmanDD has
size $O(n 2^k) = O(\poly(n))$.

Then, we analyze the time complexity of the tensor network method for simulating
this family of IQP circuits.
Denote the quantum circuit corresponding to $a = 0$ as $C_0$ and the one
corresponding to $a = 1$ as $C_1$.
Since $\vb{v}_0$ is an all-ones vector, the graph $G_{C_1}$ is the complement of
$G_{C_0}$.
The Nordhaus-Gaddum property~\cite{JW12} of treewidth states that for any graph
$G$ with $n$ vertices,
\begin{equation*}
  \tw(G) + \tw(\overline{G}) \geq n - 2.
\end{equation*}
Therefore, at least one of $G_{C_0}$ and $G_{C_1}$ has treewidth at least
$(n-2)/2 = \Omega(n)$.
This implies the tensor network method has time complexity $2^{\Omega(n)}$ for
at least one of these two circuits while FeynmanDD is efficient for both.

\subsection{Numerical experiments}\label{subsec:exp}

\begin{table}[htbp]
\centering \fontsize{8pt}{9}\selectfont
\begin{tabular}{c c c c r c c r r}
\toprule
 & & & \multicolumn{2}{c}{FeynmanDD} & \multicolumn{2}{c}{quimb} & \multicolumn{2}{c}{TensorCircuit} \\
\cmidrule(lr){4-5} \cmidrule(lr){6-7} \cmidrule(lr){8-9}
n & k & Gates & Time & Mem & Time & Mem & Time & Mem$\quad$ \\
\midrule
20 & 5 & 162.0 & 0.002 & 12.88 & 0.0033/1.7682 & 202.93 & 1.81/34.78 & 473.47 \\
20 & 7 & 163.0 & 0.003 & 12.88 & 0.0031/2.1038 & 204.79 & 1.83/42.73 & 477.11 \\
30 & 5 & 319.8 & 0.003 & 12.90 & 0.0090/3.8352 & 213.92 & 3.31/86.50 & 539.16 \\
30 & 7 & 321.1 & 0.003 & 12.83 & 0.0154/4.2740 & 220.81 & 3.42/95.95 & 558.22 \\
40 & 5 & 527.3 & 0.004 & 12.87 & 0.7052/7.9125 & 932.70 & 5.80/127.72 & 821.91 \\
40 & 7 & 531.7 & 0.005 & 12.88 & 0.9697/11.2940 & 1433.34 & 6.20/129.52 & 1293.28 \\
\bottomrule
\end{tabular}
\caption{Quantum circuit simulation benchmarks on linear-network circuits using three backends: FeynmanDD, quimb, and TensorCircuit.
  In the table, $n$ represents the number of qubits, and rank $k$ is a constant
  measuring gate locality.
  For each pair of $(n,k)$ parameters, 10 circuit instances are generated for
  testing.
  Time is measured in seconds (s), and memory usage is measured in MB\@.
  The timing data of tensor network simulators is presented as contraction time/total time.
  }\label{tab:ln_results}
\end{table}

\begin{table}[htbp]
\centering \fontsize{8pt}{9}\selectfont
\begin{tabular}{c c c r c c c c}
\toprule
 & & \multicolumn{2}{c}{FeynmanDD} & \multicolumn{2}{c}{quimb} & \multicolumn{2}{c}{TensorCircuit} \\
\cmidrule(lr){3-4} \cmidrule(lr){5-6} \cmidrule(lr){7-8}
n & Gates & Time & Mem & Time & Mem & Time & Mem \\
\midrule
\multicolumn{8}{c}{CZ v2 d10} \\
16 & 115 & 0.0069 & 5.73 & 0.0015/0.82200 & 196.75 & 0.46/13.53 & 418.82 \\
20 & 145 & 0.0123 & 7.13 & 0.0012/11.1465 & 204.42 & 0.59/23.39 & 427.34 \\
25 & 184 & 0.0235 & 9.66 & 0.0013/0.49820 & 200.52 & 0.68/30.87 & 434.68 \\
\midrule
\multicolumn{8}{c}{iS v1 d10} \\
16 & 115 & 0.0186 & 8.41 & 0.0012/20.9540 & 204.62 & 0.46/16.92 & 425.38 \\
20 & 145 & 0.0485 & 15.80 & 0.0012/11.1722 & 201.10 & 0.57/21.13 & 432.32 \\
25 & 184 & 0.1024 & 28.03 & 0.0015/0.49450 & 199.61 & 0.66/28.70 & 440.32 \\
\bottomrule
\end{tabular}
\caption{Quantum circuit simulation benchmarks on Google quantum circuits (CZ v2
  d10 and iS v1 d10) using three backends: FeynmanDD, quimb, and TensorCircuit.
  In the table, $n$ represents the number of qubits.
  For each $n$, 10 circuit instances are generated for testing.
  Time is measured in seconds (s), and peak memory usage is measured in
  MB\@. 
  The timing data of tensor network simulators is presented as contraction time/total time.
  }\label{tab:google_results}
\end{table}

\begin{table}[htbp]
\centering \fontsize{8pt}{9}\selectfont
\begin{tabular}{c c c c r c c r r}
\toprule
 & & & \multicolumn{2}{c}{FeynmanDD} & \multicolumn{2}{c}{quimb} & \multicolumn{2}{c}{TensorCircuit} \\
\cmidrule(lr){4-5} \cmidrule(lr){6-7} \cmidrule(lr){8-9}
n & k & Gates & Time & Mem & Time & Mem & Time & Mem$\quad$ \\
\midrule
20 & 5 & 163.3 & 0.0050 & 5.04 & 0.0034/2.0102\phantom{(1)} & 200.68\phantom{(1)} & 1.39/\phantom{1}78.30 & 455.77 \\
20 & 7 & 160.0 & 0.0104 & 6.34 & 0.0034/2.2925\phantom{(1)}  & 201.47\phantom{(1)} & 1.31/\phantom{1}58.13 & 453.52 \\
30 & 5 & 324.8 & 0.0102 & 6.25 & 0.1457/6.0830\textcolor{red}{(1)} & 389.12\textcolor{red}{(1)} & 3.18/110.55 & 647.22 \\
30 & 7 & 321.2 & 0.0317 & 10.99 & 0.1387/5.7132\phantom{(1)}  & 330.47\phantom{(1)} & 3.24/110.97 & 703.99 \\
40 & 5 & 537.0 & 0.0146 & 7.50 & 4.8298/50.317\phantom{(1)}& 3960.82\phantom{(10)} & 140.87/268.17 & \phantom{1}66586.96 \\
40 & 7 & 532.4 & 0.0511 & 15.79 & 14.951/258.46\phantom{(1)} & 10794.56\phantom{(100)} & 186.46/312.41 & 102544.06 \\
\bottomrule
\end{tabular}
\caption{Quantum circuit simulation benchmarks on linear rank-width circuits.
  In the table, $n$ represents the number of qubits, and rank $k$ is a constant
  measuring gate locality.
  For each pair of $(n,k)$ parameters, 10 circuit instances are generated for
  testing.
  Time is measured in seconds (s), and memory usage is measured in MB\@.
  The timing data of tensor network simulators is presented as contraction time/total time.
  The {\color{red} ($x$)} mark indicates that {\color{red} $x$} tests in the
  group of ten circuits throws out-of-memory (OOM) and the displayed value is
  derived from the average of the remaining $10 - {\color{red} x}$ test
  results.}\label{tab:lrw_results}
\end{table}

We conduct numerical experiments to compare the performance of FeynmanDD with
two state-of-the-art tensor network simulators, Quimb~\cite{Gra18} and
TensorCircuit~\cite{ZAW+23}.
The experiments are performed on a server equipped with two Intel Xeon Platinum
8358P CPUs (each with 32 cores/64 threads) and 512 GiB of memory.

We first benchmark the three simulators on two families of Google supremacy
circuits: CZ v2 d10 and iS v1 d10.\footnote{The benchmark circuits are from the
  repository at \url{https://github.com/sboixo/GRCS}.}
We then benchmark linear-network circuits, based on the construction in \cref{subsec:ln-ckts},
and a new circuit family with bounded linear rank-width, based on the construction in \cref{subsec:lrw-construction}.

For each family of circuits, we generate 10 random instances per parameter
configuration.
We measure both runtime and peak memory usage for computing the zero-to-zero
amplitude of each circuit.
We use quimb's default \texttt{amplitude} method for amplitude computation.
For TensorCircuit, we employ the built-in \texttt{amplitude} method along with
the customized contraction path finder recommended in section 6.5.1
of~\cite{ZAW+23}.
For the two tensor network simulators, quimb and TensorCircuit, the execution
time is disaggregated into the duration of the tensor contraction itself and the
total runtime, which incorporates the time for path finding and other overheads.
FeynmanDD's timing encompasses the entire process from DD construction to
counting completion.

The results, summarized in \cref{tab:google_results,tab:lrw_results},
demonstrate that FeynmanDD significantly outperforms the two tensor network
simulators on circuits exhibiting bounded linear rank-width, achieving superior
performance in both simulation time and memory usage.
For the Google supremacy circuits, the current FeynmanDD implementation is
faster than TensorCircuit but slower than Quimb's contraction time.
Whether this speed gap is a fundamental limitation of the method or merely an
artifact of the implementation remains a subject for future exploration.
Given that heuristics for minimizing the linear rank-width are largely
unexplored, we did not compare the variable ordering finding cost across the
methods, instead adopting the qubit order proposed in~\cite{WCYJ25}.

\bibliographystyle{alpha} 
\bibliography{bdd-quantum}

\newcommand{\etalchar}[1]{$^{#1}$}
\begin{thebibliography}{VdNDVB07}

\bibitem[AAB{\etalchar{+}}19]{AAB+19}
Frank Arute, Kunal Arya, Ryan Babbush, Dave Bacon, Joseph~C. Bardin, Rami
  Barends, Rupak Biswas, Sergio Boixo, Fernando G. S.~L. Brandao, David~A.
  Buell, Brian Burkett, Yu~Chen, Zijun Chen, Ben Chiaro, Roberto Collins,
  William Courtney, Andrew Dunsworth, Edward Farhi, Brooks Foxen, Austin
  Fowler, Craig Gidney, Marissa Giustina, Rob Graff, Keith Guerin, Steve
  Habegger, Matthew~P. Harrigan, Michael~J. Hartmann, Alan Ho, Markus Hoffmann,
  Trent Huang, Travis~S. Humble, Sergei~V. Isakov, Evan Jeffrey, Zhang Jiang,
  Dvir Kafri, Kostyantyn Kechedzhi, Julian Kelly, Paul~V. Klimov, Sergey Knysh,
  Alexander Korotkov, Fedor Kostritsa, David Landhuis, Mike Lindmark, Erik
  Lucero, Dmitry Lyakh, Salvatore Mandr{\`a}, Jarrod~R. McClean, Matthew
  McEwen, Anthony Megrant, Xiao Mi, Kristel Michielsen, Masoud Mohseni, Josh
  Mutus, Ofer Naaman, Matthew Neeley, Charles Neill, Murphy~Yuezhen Niu, Eric
  Ostby, Andre Petukhov, John~C. Platt, Chris Quintana, Eleanor~G. Rieffel,
  Pedram Roushan, Nicholas~C. Rubin, Daniel Sank, Kevin~J. Satzinger, Vadim
  Smelyanskiy, Kevin~J. Sung, Matthew~D. Trevithick, Amit Vainsencher, Benjamin
  Villalonga, Theodore White, Z.~Jamie Yao, Ping Yeh, Adam Zalcman, Hartmut
  Neven, and John~M. Martinis.
\newblock Quantum supremacy using a programmable superconducting processor.
\newblock {\em Nature}, 574(7779):505--510, 2019.

\bibitem[AG04]{AG04}
Scott Aaronson and Daniel Gottesman.
\newblock Improved {Simulation} of {Stabilizer} {Circuits}.
\newblock {\em Physical Review A}, 70(5):052328, 2004.

\bibitem[AK15]{AK15}
Isolde Adler and Mamadou~Moustapha Kant{\'e}.
\newblock Linear rank-width and linear clique-width of trees.
\newblock {\em Theoretical Computer Science}, 589:87--98, 2015.

\bibitem[ALM07]{ALM07}
Dorit Aharonov, Zeph Landau, and Johann Makowsky.
\newblock The quantum {FFT} can be classically simulated, 2007.

\bibitem[AMSDS20]{AMDS20}
Jeremy~C. Adcock, Sam Morley-Short, Axel Dahlberg, and Joshua~W. Silverstone.
\newblock Mapping graph state orbits under local complementation.
\newblock {\em Quantum}, 4:305, 2020.

\bibitem[BFG{\etalchar{+}}93]{BFG+93}
R.I. Bahar, E.A. Frohm, C.M. Gaona, G.D. Hachtel, E.~Macii, A.~Pardo, and
  F.~Somenzi.
\newblock Algebraic decision diagrams and their applications.
\newblock In {\em Proceedings of 1993 {International} {Conference} on
  {Computer} {Aided} {Design} ({ICCAD})}, pages 188--191, Santa Clara, CA, USA,
  1993. IEEE Comput. Soc. Press.

\bibitem[BG16]{BG16}
Sergey Bravyi and David Gosset.
\newblock Improved classical simulation of quantum circuits dominated by
  {Clifford} gates.
\newblock {\em Physical Review Letters}, 116(25):250501, 2016.

\bibitem[BISN18]{BISN18}
Sergio Boixo, Sergei~V. Isakov, Vadim~N. Smelyanskiy, and Hartmut Neven.
\newblock Simulation of low-depth quantum circuits as complex undirected
  graphical models, 2018.

\bibitem[BJS11]{BJS11}
Michael~J. Bremner, Richard Jozsa, and Dan~J. Shepherd.
\newblock Classical simulation of commuting quantum computations implies
  collapse of the polynomial hierarchy.
\newblock {\em Proc. R. Soc. A}, 467(2126):459--472, 2011.

\bibitem[BMS16]{BMS16}
Michael~J. Bremner, Ashley Montanaro, and Dan~J. Shepherd.
\newblock Average-{Case} {Complexity} {Versus} {Approximate} {Simulation} of
  {Commuting} {Quantum} {Computations}.
\newblock {\em Phys. Rev. Lett.}, 117(8):080501, 2016.

\bibitem[Bry86]{Bry86}
Randal~E. Bryant.
\newblock Graph-{Based} {Algorithms} for {Boolean} {Function} {Manipulation}.
\newblock {\em IEEE Transactions on Computers}, C-35(8):677--691, 1986.

\bibitem[Bry95]{Bry95}
R.E. Bryant.
\newblock Binary decision diagrams and beyond: enabling technologies for formal
  verification.
\newblock In {\em Proceedings of {IEEE} {International} {Conference} on
  {Computer} {Aided} {Design} ({ICCAD})}, pages 236--243, San Jose, CA, USA,
  1995. IEEE Comput. Soc. Press.

\bibitem[DW18]{DW18}
Axel Dahlberg and Stephanie Wehner.
\newblock Transforming graph states using single-qubit operations.
\newblock {\em Phil. Trans. R. Soc. A.}, 376(2123):20170325, 2018.

\bibitem[FG06]{FG06}
J{\"o}rg Flum and M.~Grohe.
\newblock {\em Parameterized complexity theory}.
\newblock Texts in theoretical computer science. Springer, Berlin ; New York,
  2006.

\bibitem[Got98]{Got98}
Daniel Gottesman.
\newblock The {Heisenberg} {Representation} of {Quantum} {Computers}.
\newblock {\em arXiv:quant-ph/9807006}, 1998.

\bibitem[Gra18]{Gra18}
Johnnie Gray.
\newblock quimb: a python library for quantum information and many-body
  calculations.
\newblock {\em Journal of Open Source Software}, 3(29):819, 2018.

\bibitem[HEB04]{HEB04}
M.~Hein, J.~Eisert, and H.~J. Briegel.
\newblock Multiparty entanglement in graph states.
\newblock {\em Phys. Rev. A}, 69(6):062311, 2004.

\bibitem[Hli18]{Hli18}
Petr Hlin\v{e}n\'{y}.
\newblock A {Simpler} {Self}-reduction {Algorithm} for {Matroid}
  {Path}-{Width}.
\newblock {\em SIAM J. Discrete Math.}, 32(2):1425--1440, 2018.

\bibitem[HO08]{HO08}
Petr Hliněn{\'y} and Sang-il Oum.
\newblock Finding {Branch}-{Decompositions} and {Rank}-{Decompositions}.
\newblock {\em SIAM Journal on Computing}, 38(3):1012--1032, 2008.

\bibitem[HZN{\etalchar{+}}21]{HZN+21}
Cupjin Huang, Fang Zhang, Michael Newman, Xiaotong Ni, Dawei Ding, Junjie Cai,
  Xun Gao, Tenghui Wang, Feng Wu, Gengyan Zhang, Hsiang-Sheng Ku, Zhengxiong
  Tian, Junyin Wu, Haihong Xu, Huanjun Yu, Bo~Yuan, Mario Szegedy, Yaoyun Shi,
  Hui-Hai Zhao, Chunqing Deng, and Jianxin Chen.
\newblock Efficient parallelization of tensor network contraction for
  simulating quantum computation.
\newblock {\em Nature Computational Science}, 1(9):578--587, 2021.

\bibitem[JW12]{JW12}
Gwena{\"e}l Joret and David~R. Wood.
\newblock Nordhaus-{Gaddum} for treewidth.
\newblock {\em European Journal of Combinatorics}, 33(4):488--490, 2012.

\bibitem[Kit97]{Kit97}
Alexei Kitaev.
\newblock Quantum computations: algorithms and error correction.
\newblock {\em Russian Mathematical Surveys}, 52(6):1191--1249, 1997.

\bibitem[KLR{\etalchar{+}}08]{KLR+08}
E.~Knill, D.~Leibfried, R.~Reichle, J.~Britton, R.~B. Blakestad, J.~D. Jost,
  C.~Langer, R.~Ozeri, S.~Seidelin, and D.~J. Wineland.
\newblock Randomized benchmarking of quantum gates.
\newblock {\em Physical Review A}, 77(1):012307, 2008.

\bibitem[Knu09]{Knu09}
Donald~E. Knuth.
\newblock {\em The {Art} of {Computer} {Programming}, {Volume} 4, {Fascicle} 1
  ({Bitwise} {Tricks} \& {Techniques}; {Binary} {Decision} {Diagrams})}.
\newblock AddisonWesley Professional, Upper Saddle River, NJ, 1 edition
  edition, 2009.

\bibitem[KS93]{KS93}
Ephraim Korach and Nir Solel.
\newblock Tree-width, path-width, and cutwidth.
\newblock {\em Discrete Applied Mathematics}, 43(1):97--101, 1993.

\bibitem[MGE11]{MGE11}
Easwar Magesan, J.~M. Gambetta, and Joseph Emerson.
\newblock Scalable and {Robust} {Randomized} {Benchmarking} of {Quantum}
  {Processes}.
\newblock {\em Physical Review Letters}, 106(18):180504, 2011.

\bibitem[MMBS04]{MMBS04}
Paul Molitor, Janett Mohnke, Bernd Becker, and Christoph Scholl.
\newblock {\em Equivalence {Checking} of {Digital} {Circuits}: {Fundamentals},
  {Principles}, {Methods}}.
\newblock Springer New York, NY, 2004.

\bibitem[MS08]{MS08}
Igor~L. Markov and Yaoyun Shi.
\newblock Simulating {Quantum} {Computation} by {Contracting} {Tensor}
  {Networks}.
\newblock {\em SIAM Journal on Computing}, 38(3):963--981, 2008.

\bibitem[NC00]{NC00}
Michael~A. Nielsen and Isaac~L. Chuang.
\newblock {\em Quantum {Computation} and {Quantum} {Information}}.
\newblock Cambridge University Press, 2000.

\bibitem[O'G19]{OGo19}
Bryan O'Gorman.
\newblock Parameterization of tensor network contraction.
\newblock {\em LIPIcs, Volume 135, TQC 2019}, 135:10:1--10:19, 2019.

\bibitem[Or{\'u}14]{Oru14}
Rom{\'a}n Or{\'u}s.
\newblock A practical introduction to tensor networks: {Matrix} product states
  and projected entangled pair states.
\newblock {\em Annals of Physics}, 349:117--158, 2014.

\bibitem[OS06]{OS06}
Sang-il Oum and Paul Seymour.
\newblock Approximating clique-width and branch-width.
\newblock {\em Journal of Combinatorial Theory, Series B}, 96(4):514--528,
  2006.

\bibitem[Oum05]{Oum05}
Sang-il Oum.
\newblock Rank-width and vertex-minors.
\newblock {\em Journal of Combinatorial Theory, Series B}, 95(1):79--100, 2005.

\bibitem[Oum17]{Oum17}
Sang-il Oum.
\newblock Rank-width: {Algorithmic} and structural results.
\newblock {\em Discrete Applied Mathematics}, 231:15--24, 2017.

\bibitem[PGVWC07]{PVWC07}
D.~Perez-Garcia, F.~Verstraete, M.~M. Wolf, and J.~I. Cirac.
\newblock Matrix {Product} {State} {Representations}.
\newblock {\em arXiv:quant-ph/0608197}, 2007.

\bibitem[SB09]{SB09}
Dan Shepherd and Michael~J. Bremner.
\newblock Temporally unstructured quantum computation.
\newblock {\em Proc. R. Soc. A}, 465(2105):1413--1439, 2009.

\bibitem[SCR23]{SCR23}
Meghana Sistla, Swarat Chaudhuri, and Thomas Reps.
\newblock Symbolic quantum simulation with quasimodo.
\newblock In {\em Computer Aided Verification: 35th International Conference,
  CAV 2023, Paris, France, July 17--22, 2023, Proceedings, Part III}, pages
  213--225, Berlin, Heidelberg, 2023. Springer-Verlag.

\bibitem[SCR24]{SCR24}
Meghana Sistla, Swarat Chaudhuri, and Thomas Reps.
\newblock Weighted {Context}-{Free}-{Language} {Ordered} {Binary} {Decision}
  {Diagrams}.
\newblock {\em Weighted CFLOBDDs}, 8(OOPSLA2):320:1390--320:1419, 2024.

\bibitem[Val02]{Val02}
Leslie~G. Valiant.
\newblock Quantum {Circuits} {That} {Can} {Be} {Simulated} {Classically} in
  {Polynomial} {Time}.
\newblock {\em SIAM Journal on Computing}, 31(4):1229--1254, 2002.

\bibitem[VdNDVB07]{V-d-NDVB07}
M.~Van~den Nest, W.~D{\"u}r, G.~Vidal, and H.~J. Briegel.
\newblock Classical simulation versus universality in measurement-based quantum
  computation.
\newblock {\em Physical Review A}, 75(1):012337, 2007.

\bibitem[Vid04]{Vid04}
Guifr{\'e} Vidal.
\newblock Efficient {Simulation} of {One}-{Dimensional} {Quantum} {Many}-{Body}
  {Systems}.
\newblock {\em Physical Review Letters}, 93(4):040502, 2004.

\bibitem[Vid08]{Vid08}
G.~Vidal.
\newblock Class of {Quantum} {Many}-{Body} {States} {That} {Can} {Be}
  {Efficiently} {Simulated}.
\newblock {\em Physical Review Letters}, 101(11):110501, 2008.

\bibitem[WCYJ25]{WCYJ25}
Ziyuan Wang, Bin Cheng, Longxiang Yuan, and Zhengfeng Ji.
\newblock {FeynmanDD: Quantum Circuit Analysis with Classical Decision
  Diagrams}.
\newblock In {\em Proceedings of the 37th International Conference on Computer
  Aided Verification}, Zagreb, Croatia, 2025.

\bibitem[Weg00]{Weg00}
Ingo Wegener.
\newblock {\em Branching programs and binary decision diagrams: theory and
  applications}.
\newblock {SIAM} monographs on discrete mathematics and applications. Society
  for Industrial and Applied Mathematics, Philadelphia, 2000.

\bibitem[ZAW{\etalchar{+}}23]{ZAW+23}
Shi-Xin Zhang, Jonathan Allcock, Zhou-Quan Wan, Shuo Liu, Jiace Sun, Hao Yu,
  Xing-Han Yang, Jiezhong Qiu, Zhaofeng Ye, Yu-Qin Chen, Chee-Kong Lee, Yi-Cong
  Zheng, Shao-Kai Jian, Hong Yao, Chang-Yu Hsieh, and Shengyu Zhang.
\newblock {TensorCircuit}: a quantum software framework for the {NISQ} era.
\newblock {\em Quantum}, 7:912, 2023.

\bibitem[ZHW19]{ZHW19}
Alwin Zulehner, Stefan Hillmich, and Robert Wille.
\newblock How to {Efficiently} {Handle} {Complex} {Values}? {Implementing}
  {Decision} {Diagrams} for {Quantum} {Computing}.
\newblock {\em arXiv:1911.12691 [quant-ph]}, 2019.

\bibitem[ZW19]{ZW19}
Alwin Zulehner and Robert Wille.
\newblock Advanced {Simulation} of {Quantum} {Computations}.
\newblock {\em IEEE Transactions on Computer-Aided Design of Integrated
  Circuits and Systems}, 38(5):848--859, 2019.

\end{thebibliography}

\appendix

\end{document}